\newcommand{\RR}{\mathbb R}
\newcommand{\bbmat}{\begin{bmatrix}}
\newcommand{\ebmat}{\end{bmatrix}}
\newcommand{\norm}[1]{\left\lVert #1 \right\rVert}
\newtheorem{theorem}{Theorem}[section]
\newtheorem{lemma}{Lemma}[section]
\newtheorem{remark}{Remark}[section]
\newcommand{\Ac}{{\mathcal A}}
\newcommand{\Cc}{{\mathcal C}}
\newcommand{\Hc}{{\mathcal H}}
\newcommand{\Mc}{{\mathcal M}}
\newcommand*\@dblLabelI {}
\newcommand*\@dblLabelII {}
\newcommand*\@dblequationAux {}
\def\@dblequationAux #1,#2,%
\def\@dblLabelI{\label{#1}}\def\@dblLabelII{\label{#2}}}
\newcommand*{\doubleequation}[3][]{%
    \par\vskip\abovedisplayskip\noindent
    \if\relax\detokenize{#1}\relax
       \let\@dblLabelI\@empty
       \let\@dblLabelII\@empty
    \else 
       \@dblequationAux #1,%
    \fi
    \makebox[0.5\linewidth-1.5em]{%
     \hspace{\stretch2}%
     \makebox[0pt]{$\displaystyle #2$}%
     \hspace{\stretch1}%
    }%
    \makebox[0.5\linewidth-1.5em]{%
     \hspace{\stretch1}%
     \makebox[0pt]{$\displaystyle #3$}%
     \hspace{\stretch2}%
    }%
    \makebox[3em][r]{(%
  \refstepcounter{equation}\theequation\@dblLabelI, 
  \refstepcounter{equation}\theequation\@dblLabelII)}%
  \par\vskip\belowdisplayskip
}
\newcommand{\x}{x^\text{opt}_\lambda}
\newcommand{\xs}{x^\text{opt}_{\lambda^\star}}
\newcommand{\xsi}{x^\text{opt}_{\lambda^\star_i}}
\newcommand{\xtwo}{x^{\text{opt,rlx}}_\lambda}
\newcommand{\aperp}{a_{\lambda^\star}\left(W^\perp\right)}
\newcommand{\asW}{a_{\lambda^\star}(W)}
\newcommand{\hs}{h_{\lambda^\star}}
\newcommand{\hperp}{h_{\lambda^\star}\left(W^\perp\right)}
\newcommand{\hsW}{h_{\lambda^\star}(W)}
\newcommand{\ls}{\lambda^\star}
\newcommand{\Monegen}{M_\text{Alg1}}
\newcommand{\Mone}{M_\text{Alg1}^{W=0}}
\newcommand{\Mtwo}{M_\text{Alg2}}
\newcommand{\dg}{d_{\lambda^\star}(\lambda,W)}
\newcommand{\sg}{s_{\ls}(\lambda)}
\newcommand{\Cperp}{\mathcal{C}_{\lambda^\star}^\perp(\lambda)}
\title{Approximate Regularization Paths for Nuclear Norm Minimization Using Singular Value Bounds -- With Implementation and Extended Appendix}
\name{N. Blomberg$^\dagger$, C.R. Rojas$^\dagger$, B. Wahlberg$^\dagger$ \thanks{This work was supported by the European Research Council under the advanced grant
LEARN, contract 267381, and by the Swedish Research Council under contract 621-2009-4017.}}
\address{$^\dagger$Department of Automatic Control and ACCESS Linnaeus Center, School of Electrical Engineering, \\ KTH--Royal Institute of Technology, SE-100 44 Stockholm, Sweden. $\tt{\{nibl, crro, bo\}@kth.se}$.}
\begin{document}
%
\maketitle
\begin{abstract}
The widely used nuclear norm heuristic for rank minimization problems introduces a regularization parameter which is difficult to tune. We have recently proposed a method to approximate the regularization path, i.e., the optimal solution as a function of the parameter, which requires solving the problem only for a sparse set of points. In this paper, we extend the algorithm to provide error bounds for the singular values of the approximation. We exemplify the algorithms on large scale benchmark examples in model order reduction. Here, the order of a dynamical system is reduced by means of constrained minimization of the nuclear norm of a Hankel matrix.
\end{abstract}
\begin{keywords}
Nuclear norm heuristic, regularization path, singular value perturbation, model order reduction.
\end{keywords}

\section{Introduction}

Rank minimization has important applications in e.g. signal processing, control, machine learning, system identification, and model order reduction. The matrix argument can e.g. be a covariance matrix (as in sensor array processing and multivariate statistical data analysis) or a structured matrix such as a Hankel matrix (as in system realization), \cite{Fazel:2002}. Specifically, application areas include spectrum sensing \cite{Meng:2010}, signal time delay estimation \cite{Jiang:2013}, phase retrieval of sparse signals \cite{Jaganathan:2012}, wireless network inference \cite{Papailiopoulos:2012}, channel equalization \cite{Konishi:2011}, etc.

In general, the rank minimization problem is non-convex and NP-hard \cite{Vandenberghe:1996}. However, a common convex heuristic for these problems is nuclear norm minimization. The nuclear norm $\norm{\cdot}_* = \sum_i \sigma_i(\cdot)$, i.e., the sum of the singular values, is used as a convex surrogate for the non-convex rank function; this is so because the nuclear norm can be interpreted as a convex relaxation of the rank, since it is the pointwise tightest convex function (called a \textit{convex envelope} \cite{Fazel-Hindi-Boyd-01}) to lower-bound the rank, for matrices inside a unit spectral-norm ball.

Consider a general case of minimization of the nuclear norm of a linear map subject to a quadratic constraint:

\begin{equation} \label{problem1}
\begin{aligned}
& \underset{x\in \RR^n}{\text{minimize}}
& & \norm{\Ac (x)}_* \\
& \text{subject to}
& & \norm{x-x_o}_2 \leq \lambda,
\end{aligned}
\end{equation}
where $\Ac: \RR^n \rightarrow \RR^{p\times q}$ is a linear map (for simplicity, from now on we treat the symmetric case, $p=q$), $x\in\RR^n$ is the decision variable, and $\lambda$ is the regularization parameter.

Note that the formulation in (\ref{problem1}) belongs to a subclass of regularized nuclear norm optimization problems. Other formulations include exchanging cost and constraint or the penalized version \cite{Rojas:2014}. In addition, our theory can readily be extended to weighted norms, $\norm{x}_W:=x^TWx$. Then, the quadratic constraint is equivalent to the general quadratic inequality $x^TPx + q^Tx + r \leq 0$. 

\begin{figure*}
  \includegraphics[width=\textwidth,height=3cm]{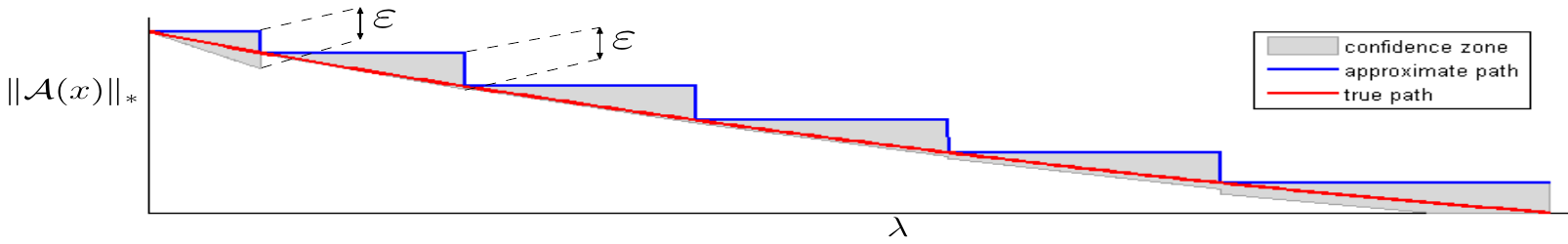}
  \caption{Illustration of regularization path algorithm proposed in \cite{Blomberg:2014}. $x$-axis: regularization parameter, $\lambda$. $y$-axis: cost of (\ref{problem1}). The true regularization path (red) is guaranteed to lie in the shaded zone. The approximate path (blue) is guaranteed to differ by at most $\varepsilon$ from the true path.}
\end{figure*}

The key issue here is that, although regularized nuclear norm minimization has been thoroughly studied, it suffers from the fact that the dependence of the solution on the regularization parameter is difficult to predict. Without, in general, \textit{a priori} knowledge on how to choose $\lambda$, we are motivated to study the so called \textit{regularization path}, i.e., the optimal solution as a function of the regularization parameter. For problem (\ref{problem1}) the regularization path is defined on the domain
\begin{equation} \label{lambda}
\lambda \in (\lambda_\text{min},\lambda_\text{max}) := (0,\norm{x_o}_2),
\end{equation}
since for $\lambda=0$ the solution to (\ref{problem1}) is known, $x^\text{opt}=x_o$, and for $\lambda \geq \norm{x_o}_2$ the constraint set is large enough to include the unconstrained minimum, $x^\text{opt}=0$. 

For practical purposes the domain of the regularization path must be discretized, which raises the question of how to choose the grid points. This is indeed an important question since problem (\ref{problem1}) can be computationally costly to solve. 

To address this problem, in \cite{Blomberg:2014}, we presented a method to choose the grid points based on a worst-case approximation error when the optimal solution for $\lambda$, $\x$, is approximated by $\xs$ for $\ls<\lambda$. The idea is visualized in Figure 1. Given the solution for some $\ls$, we increase $\lambda$ beyond $\ls$ until the worst-case approximation error reaches a pre-specified tolerance, $\varepsilon$, and then we re-evaluate (\ref{problem1}). Iteratively, starting for $\ls_0=0$, this procedure generates a set of grid points, $\ls_i,i=1,\ldots,m$, and an approximate regularization path such that the approximation error is within $\varepsilon$ for all $\lambda$.

The novelty of this paper consists of two new algorithms. The first gives a guarantee on the cost function of (\ref{problem1}). The second gives a guarantee on the singular values of $\Ac(\x)$, when $\x$ is approximated by $\xs$. Furthermore, we derive upper bounds on the number of grid points needed by the algorithms to meet a tolerance $\varepsilon$.

\parskip = 0pt
\section{Error bounds for approximation of (1)}

In this section we derive error bounds that allow us to confine the true regularization path within a certain region (the shaded area in Figure 1).

Define the singular values of $\Ac(\x)$, where $\x$ is optimal for (\ref{problem1}) for parameter value $\lambda$, as
$$
\sigma\left(\Ac\left(\x\right)\right) =: \left(\sigma_1^\lambda, \ldots, \sigma_p^\lambda\right).
$$
For further use in the below presented Algorithms 1 and 2, respectively, we derive upper bounds on the quantities:
\doubleequation[err_d,err_s]{\norm{\Ac\left(\xs\right)}_*-\norm{\Ac\left(\x\right)}_* \text{ and }}{\quad\qquad\qquad\qquad\sum\limits_i^p \left(\sigma^{\ls}_i - \sigma^\lambda_i \right)^2,}
\noindent
where $\xs$ is given. The bounds can be viewed as worst-case approximation errors in the singular values when $\x$ is approximated by $\xs$.

\subsection{Relaxation of (1) using subgradients}

We here relax problem (\ref{problem1}) using subgradients of the nuclear norm. The concept of subdifferentials (or sets of subgradients) is a generalization of the gradient that applies to functions whose gradient is undefined in some point or points, \cite{Rockafellar-70}. In the case of the nuclear norm, the subdifferential is (see e.g. \cite{Recht:2010}):
\begin{equation*}
\partial\norm{X}_* = \left\lbrace UV^T+W: U^TW = WV = 0, \norm{W}\leq 1 \right\rbrace,
\end{equation*}
where $X=U\Sigma V^T \in \RR^{p\times p}$ is a compact singular value decomposition $W \in \RR^{p\times p}$. $U^TW = W^TV = 0$ implies that $X$ and $W$ must have orthogonal row and column spaces.

Now, assume that $\xs$ solves (\ref{problem1}) for some parameter value $\lambda=\ls$. Then, since the nuclear norm is convex we can write, for any matrix $\Ac (x) \in \RR^{p\times p}$, the inequality
\begin{align*}
\norm{\Ac (x)}_* &\geq \norm{\Ac \left(\xs\right)}_* + \left\langle U_{\ls}V_{\ls}^T+W, \Ac (x)-\Ac \left( \xs \right) \right\rangle \\
&= \norm{\Ac \left(\xs\right)}_* + \Ac^*(U_{\ls}V_{\ls}^T+W)^T \left( x - \xs \right),
\end{align*}
where $U_{\ls}V_{\ls}^T+W \in \left.\partial\norm{X}_*\right|_{X=\Ac \left(\xs\right)}$, $\left\langle A,B \right\rangle=\text{Tr }B^TA$ is the standard inner product, and $\Ac^*$ is the adjoint operator of $\Ac$. For shorter notation we define
\begin{equation} \label{subgr}
\Ac^*(U_{\ls}V_{\ls}^T + W)^T=: \asW^T.
\end{equation}

To sum up, the above inequality becomes
\begin{equation} \label{ineq}
\norm{\Ac(x)}_* \geq \norm{\Ac\left(\xs\right)}_* + \asW^T\left(x-\xs\right),
\end{equation}
which implies that for $\lambda>\ls$ the optimal argument $\x$ must lie in the half-space $\left\lbrace x : \asW^T\left(x-\xs\right) \leq 0 \right\rbrace$.

Using the inequality in (\ref{ineq}) we can relax (\ref{problem1}) into
\begin{equation} \label{problem2}
\begin{aligned}
& \underset{x}{\text{min}}
& & \norm{\Ac \left(\xs\right)}_* + \asW^T\left(x-\xs\right) \\
& \text{s.t.}
& & \norm{x-x_o}_2 \leq \lambda.
\end{aligned}
\end{equation}
Problem (\ref{problem2}) is solved analytically in the following lemma:
\begin{lemma}
Problem (\ref{problem2}) has the optimal solution
\begin{equation*}
\xtwo = x_o - \frac{\lambda}{\norm{\asW}_2}\asW,
\end{equation*}
and optimal cost
\begin{equation} \label{optcostP2}
\norm{\Ac \left(\xs\right)}_* + \asW^T\left( x_o - \xs \right) - \lambda \norm{\asW}_2.
\end{equation}
\end{lemma}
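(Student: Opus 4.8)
The plan is to solve the linearized problem (\ref{problem2}) directly, exploiting the fact that both the objective and the constraint are now simple: the objective is affine in $x$, and the constraint set is a Euclidean ball of radius $\lambda$ centered at $x_o$. First I would rewrite the objective by collecting the constant terms, noting that $\norm{\Ac(\xs)}_* + \asW^T(x-\xs) = c + \asW^T x$ where $c := \norm{\Ac(\xs)}_* - \asW^T\xs$ is independent of $x$. Minimizing an affine functional over a ball is a textbook computation: the minimizer is attained on the boundary, in the direction opposite to the gradient $\asW$.

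Concretely, I would substitute $x = x_o + \lambda u$ with $\norm{u}_2 \le 1$, so the objective becomes $c + \asW^T x_o + \lambda\, \asW^T u$. Since $\lambda > 0$, this is minimized by choosing $u$ to minimize $\asW^T u$ over the unit ball, i.e. $u^\star = -\asW/\norm{\asW}_2$ by Cauchy--Schwarz (equality case), giving $\asW^T u^\star = -\norm{\asW}_2$. Back-substituting yields $\xtwo = x_o - \frac{\lambda}{\norm{\asW}_2}\asW$, which is exactly the claimed optimal solution. Plugging $u^\star$ into the objective gives optimal cost $c + \asW^T x_o - \lambda\norm{\asW}_2$, and expanding $c$ back out recovers $\norm{\Ac(\xs)}_* + \asW^T(x_o - \xs) - \lambda\norm{\asW}_2$, matching (\ref{optcostP2}).

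There is essentially no hard step here; the only things to be careful about are (i) the implicit assumption $\asW \neq 0$, so that the normalization $\asW/\norm{\asW}_2$ is well defined — if $\asW = 0$ the objective is constant and any feasible point, in particular $x_o$, is optimal, and the stated formula degenerates but the optimal cost $\norm{\Ac(\xs)}_*$ is still correct in the limit; and (ii) verifying that $\xtwo$ is indeed feasible, which is immediate since $\norm{\xtwo - x_o}_2 = \lambda$. The main (very minor) obstacle is simply keeping the bookkeeping of constant terms straight so that the final expression for the optimal cost is displayed in the form (\ref{optcostP2}) rather than an algebraically equivalent but less transparent one.
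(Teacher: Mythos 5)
Your proposal is correct and amounts to the same argument as the paper: the paper states the conclusion geometrically (the constraint is tight at the optimum and $-\asW$ is proportional to the outward normal of the ball), whereas you derive the same boundary point explicitly via the substitution $x = x_o + \lambda u$ and Cauchy--Schwarz, which is just a more detailed phrasing of the identical one-line computation. Your additional remarks on feasibility and the degenerate case $\asW = 0$ are fine but not needed for the paper's purposes.
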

\begin{proof}
At optimum the constraint is tight and the negative gradient of the cost function, $-\asW$, is proportional to the outward pointing normal of the constraint set. This gives $\xtwo$. Inserting $\xtwo$ into the cost of (\ref{problem2}) gives (\ref{optcostP2}).
\end{proof}

\subsection{Bound on cost function approximation error, (\ref{err_d})}

Using (\ref{optcostP2}) we can upper bound the approximation error in (\ref{err_d}:
\begin{theorem}
The approximation error in (\ref{err_d}) (i.e., the cost function approximation error) for any $\lambda$ is upper-bounded by the function $\dg$, as
\begin{equation} \label{dg}
\begin{aligned}
&\norm{\Ac(\xs)}_*-\norm{\Ac(\x)}_* \leq \\
& \lambda \norm{\asW}_2 - \asW^T\left( x_o - \xs \right) =: \dg.
\end{aligned}
\end{equation}
\end{theorem}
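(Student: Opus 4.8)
The plan is to exploit the fact that problem \eqref{problem2} is a relaxation of problem \eqref{problem1}, so its optimal cost lower-bounds the optimal cost of \eqref{problem1}. First I would observe that for any $\lambda$, the feasible set of \eqref{problem2} is identical to the feasible set of \eqref{problem1} (both are the ball $\norm{x-x_o}_2 \leq \lambda$), and by the subgradient inequality \eqref{ineq} the objective of \eqref{problem2} is a pointwise lower bound on the objective $\norm{\Ac(x)}_*$ of \eqref{problem1}. Hence evaluating the \eqref{problem1}-objective at the relaxed optimizer, or more cleanly comparing optimal values, gives
\begin{equation*}
\norm{\Ac(\xtwo)}_* \;\geq\; \norm{\Ac(\xs)}_* + \asW^T(\xtwo - \xs),
\end{equation*}
and since $\xtwo$ is feasible for \eqref{problem1} while $\x$ is optimal, $\norm{\Ac(\x)}_* \leq \norm{\Ac(\xtwo)}_*$ is the wrong direction — so instead I would argue directly at the level of optimal values: the optimal cost of \eqref{problem2} is $\leq$ the optimal cost of \eqref{problem1}, i.e.
\begin{equation*}
\norm{\Ac(\xs)}_* + \asW^T(x_o - \xs) - \lambda\norm{\asW}_2 \;\leq\; \norm{\Ac(\x)}_*.
\end{equation*}

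The justification for that inequality is the standard relaxation argument: for every $x$ feasible in \eqref{problem1}, \eqref{ineq} shows the \eqref{problem2}-objective at $x$ is $\leq$ the \eqref{problem1}-objective at $x$; taking the infimum over the common feasible set, $\inf\text{(P2)} \leq \inf\text{(P1)}$. Then I would plug in the closed-form optimal cost of \eqref{problem2} from Lemma~1, equation \eqref{optcostP2}, on the left-hand side, and $\norm{\Ac(\x)}_*$ on the right. Rearranging — moving $\norm{\Ac(\xs)}_*$ to the right, moving $\norm{\Ac(\x)}_*$ to the left, and negating — yields exactly
\begin{equation*}
\norm{\Ac(\xs)}_* - \norm{\Ac(\x)}_* \;\leq\; \lambda\norm{\asW}_2 - \asW^T(x_o - \xs) \;=\; \dg,
\end{equation*}
which is \eqref{dg}.

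I do not anticipate a serious obstacle here; the only point requiring a little care is making sure the direction of the relaxation inequality is correct — namely that \eqref{problem2} is a \emph{lower}-bounding relaxation (replacing the convex objective by an affine minorant can only decrease the optimal value, since it decreases the objective pointwise over an unchanged feasible set), so that the gap goes the right way to bound $\norm{\Ac(\xs)}_* - \norm{\Ac(\x)}_*$ from above rather than below. One should also note that the bound is only meaningful (and the half-space reasoning of the preceding paragraph only relevant) for $\lambda > \ls$, but the inequality \eqref{dg} itself holds for all $\lambda$ in the path domain since it is derived purely from \eqref{ineq} and Lemma~1 without using $\lambda > \ls$. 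A one-line remark that $\dg$ is affine in $\lambda$ (hence trivially monotone) could be added, as this is what Algorithm 1 will exploit, but it is not needed for the theorem as stated.
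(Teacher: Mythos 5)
Your proof is correct and takes essentially the same approach as the paper: its one-line proof likewise observes that the optimal cost (\ref{optcostP2}) of the relaxation (\ref{problem2}) lower-bounds $\norm{\Ac(\x)}_*$ (same feasible ball, objective minorized via (\ref{ineq})), and (\ref{dg}) follows by rearrangement. Your brief detour about comparing $\xtwo$ with $\x$ directly is unnecessary but harmless, since you correctly discard it in favor of comparing optimal values.
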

\begin{proof}
The theorem follows from the fact that, for any $\lambda$, $\norm{\Ac(\x)}_*$ is lower bounded by the optimal cost in (\ref{optcostP2}).
\end{proof}
\begin{remark} \label{rem}
In Section \ref{implem} we present a Frank-Wolfe algorithm for optimizing (\ref{dg}) over $W$. Furthermore, it can be verified that there is some $W^\text{opt}$ such that $d_{\ls}(\ls,W^\text{opt})=0$, by taking $W^\text{opt}= W^\perp$ according to (\ref{hperp}).
 \end{remark}
 \begin{remark}
In resemblance with \cite{Giesen:2012}, the function $\dg$ can be interpreted as a \textit{duality gap}, since the relaxation made in (\ref{problem2}) relates to the Frank-Wolfe algorithm \cite{Jaggi:2013} when seen as a primal-dual method.
\end{remark}

\subsection{Bound on singular value approximation error, (\ref{err_s})}

Next, we derive an upper bound on the error in (\ref{err_s}). This bound will be the minimum of two separate bounds. The first of these is as follows:
\begin{lemma}
\begin{equation} \label{firstbound}
\sum\limits_i^p \left(\sigma^{\ls}_i - \sigma^\lambda_i \right)^2 \leq \norm{\sigma^{\ls}-\norm{\Ac\left(\xs\right)}_*e_{i^\text{min}}}_2^2,
\end{equation}
where $e_i$ is the $i$'th unit vector, i.e., $e_i$ has zeros everywhere except at the $i$'th component which is one, and $i^\text{min} = \text{arg min } \sigma_i^{\ls}, i = 1, \ldots, p$.
\end{lemma}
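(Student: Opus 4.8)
The plan is to control each term $(\sigma_i^{\ls} - \sigma_i^\lambda)^2$ in the sum separately, using only the gross information that the singular values $\sigma^\lambda$ are nonnegative and that their $\ell_1$-norm (i.e.\ the nuclear norm $\norm{\Ac(\x)}_*$) does not exceed $\norm{\Ac(\xs)}_*$. The latter follows because $\xs$ is merely feasible for problem (\ref{problem1}) at any $\lambda > \ls$ while $\x$ is optimal, so $\norm{\Ac(\x)}_* \le \norm{\Ac(\xs)}_* = \sum_i \sigma_i^{\ls}$; in particular each $\sigma_i^\lambda \in [0, \norm{\Ac(\xs)}_*]$. Thus we want to maximize $\sum_i (\sigma_i^{\ls} - t_i)^2$ over the box $t_i \in [0,\norm{\Ac(\xs)}_*]$ (forgetting, for an upper bound, the extra constraint $\sum_i t_i \le \norm{\Ac(\xs)}_*$), and evaluate the maximizer.

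First I would observe that a convex function of $t$ — here $\sum_i(\sigma_i^{\ls}-t_i)^2$ is convex and separable — attains its maximum over a box at a vertex, so coordinatewise the optimal $t_i$ is either $0$ or $\norm{\Ac(\xs)}_*$. For a single coordinate the two candidate values of $(\sigma_i^{\ls}-t_i)^2$ are $(\sigma_i^{\ls})^2$ and $(\sigma_i^{\ls}-\norm{\Ac(\xs)}_*)^2$; since $\norm{\Ac(\xs)}_* = \sum_j \sigma_j^{\ls} \ge \sigma_i^{\ls}$ and, more strongly, $\norm{\Ac(\xs)}_* \ge 2\sigma_{i^{\min}}^{\ls}$ only for the smallest index, one has to check which vertex wins. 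The key point is that picking $t_i = \norm{\Ac(\xs)}_*$ for \emph{more than one} index $i$ is infeasible once we remember $\sum_i t_i \le \norm{\Ac(\xs)}_*$; so at most one coordinate can be pushed to the far corner. Among all choices of which single coordinate to push, the gain $(\sigma_i^{\ls}-\norm{\Ac(\xs)}_*)^2 - (\sigma_i^{\ls})^2 = \norm{\Ac(\xs)}_*\left(\norm{\Ac(\xs)}_* - 2\sigma_i^{\ls}\right)$ is largest when $\sigma_i^{\ls}$ is smallest, i.e.\ at $i = i^{\min}$. Setting $t_{i^{\min}} = \norm{\Ac(\xs)}_*$ and all other $t_i = 0$ yields exactly the right-hand side $\norm{\sigma^{\ls} - \norm{\Ac(\xs)}_* e_{i^{\min}}}_2^2$, which is therefore an upper bound on $\sum_i(\sigma_i^{\ls}-\sigma_i^\lambda)^2$.

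The step I expect to be the main obstacle — or at least the one requiring the most care — is justifying that relaxing the simplex-type constraint $\{t \ge 0,\ \sum_i t_i \le \norm{\Ac(\xs)}_*\}$ to the box $\{0 \le t_i \le \norm{\Ac(\xs)}_*\}$ and then \emph{separately} arguing that at most one coordinate sits at the upper corner does not overshoot: one must verify that the specific feasible point $t = \norm{\Ac(\xs)}_* e_{i^{\min}}$ (which does satisfy $\sum_i t_i = \norm{\Ac(\xs)}_*$) actually achieves the claimed value and that no genuinely feasible $t$ beats it. Concretely, for any feasible $t$ write $\sum_i(\sigma_i^{\ls}-t_i)^2 = \sum_i (\sigma_i^{\ls})^2 - 2\sum_i \sigma_i^{\ls} t_i + \sum_i t_i^2 \le \sum_i(\sigma_i^{\ls})^2 - 2\sigma_{i^{\min}}^{\ls}\sum_i t_i + \left(\sum_i t_i\right)^2$, where I used $\sigma_i^{\ls} \ge \sigma_{i^{\min}}^{\ls}$ and $\sum_i t_i^2 \le \left(\sum_i t_i\right)^2$ (valid since $t \ge 0$). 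The right-hand side is a convex quadratic in $s := \sum_i t_i \in [0, \norm{\Ac(\xs)}_*]$, hence maximized at an endpoint; $s=0$ gives $\norm{\sigma^{\ls}}_2^2$ and $s = \norm{\Ac(\xs)}_*$ gives precisely $\norm{\sigma^{\ls} - \norm{\Ac(\xs)}_* e_{i^{\min}}}_2^2$, and since $\norm{\Ac(\xs)}_* \ge 2\sigma_{i^{\min}}^{\ls}$ (the total of $p\ge 1$ nonnegative terms dominates twice the smallest as soon as $p \ge 2$; the $p=1$ case is trivial) the latter is the larger, completing the bound. I would present the argument in this last, self-contained form to avoid any subtlety about vertices of the box versus vertices of the simplex.
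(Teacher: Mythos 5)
Your proof is correct and takes essentially the same route as the paper: both reduce the claim to maximizing $\sum_i(\sigma_i^{\ls}-t_i)^2$ over $t\geq 0$ with $\sum_i t_i \leq \norm{\Ac(\xs)}_*$ (using that $\xs$ is feasible at $\lambda>\ls$, so $\norm{\Ac(\x)}_*\leq\norm{\Ac(\xs)}_*$) and identify the maximizer as $t=\norm{\Ac(\xs)}_*e_{i^\text{min}}$, with your convex-quadratic-in-$s$ computation simply making explicit the vertex argument the paper only asserts. One small blemish: your parenthetical that ``the $p=1$ case is trivial'' is not right, since for $p=1$ the inequality $\norm{\Ac(\xs)}_*\geq 2\sigma^{\ls}_{i^\text{min}}$ (and indeed the stated bound itself) fails unless $\sigma_1^{\ls}=0$; this degenerate case lies outside the paper's setting, so your argument for $p\geq 2$ is all that is needed.
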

\begin{proof}
For $\lambda>\ls$, $\norm{\Ac\left(\x\right)}_*\leq\norm{\Ac\left(\xs\right)}_*$. Hence, (\ref{firstbound}) corresponds to the maximum of $\sum\limits_i^p \left(\sigma^{\ls}_i - \sigma^\lambda_i \right)^2$ subject to $\sum\limits_i^p \sigma^\lambda_i \leq \norm{\Ac\left(\xs\right)}_*$, which is reached by making $\sigma_{i^\text{min}}$ as large as possible and $\sigma_i=0$ for $i\neq i^\text{min}$.
\end{proof}

Now, we derive a second upper bound, which is complementary to the above. To do this, consider the perturbation
\begin{equation} \label{pert}
\Ac\left(\xs\right) = \Ac\left(\x\right) + E; \quad E := \Ac\left(\xs-\x\right),
\end{equation}
which is valid since $\Ac$ is linear in $x$. Then, according to Mirsky's theorem \cite{Horn-Johnson-85} the singular values of $\Ac(\x)$ obey
$$
\sum\limits_i^p \left(\sigma^{\ls}_i - \sigma^\lambda_i \right)^2 \leq \norm{E}_F^2,
$$
where, due to equivalence of finite-dimensional norms \cite{Luenberger-69},
\begin{equation} \label{normE}
\norm{E}_F^2 = \norm{\Ac(\xs-\x)}_F^2 \leq C_\Ac\norm{\xs-\x}_2^2,
\end{equation}
for some constant $C_\Ac$ depending on $\Ac$.

Furthermore, we bound $\norm{\xs-\x}_2^2$ in Lemma \ref{lem:bound} below. For this we need the following lemma:

\begin{lemma} \label{lem:perp}
There exists a $W=W^\perp$ such that $\aperp$ (see (\ref{subgr})) is proportional to the error vector $\left(x_o-\xs\right)$, i.e.,
\begin{equation} \label{hperp}
\aperp = \gamma \left(x_o-\xs\right),
\end{equation}
for some scalar $\gamma >0$.
\end{lemma}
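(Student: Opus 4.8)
The plan is to obtain the proportionality directly from the first-order (KKT) optimality conditions for problem (\ref{problem1}) at $\lambda=\ls$, and then to identify the required $W^\perp$ from the explicit description of $\partial\norm{\cdot}_*$ recalled in the excerpt. Since $\ls\in(0,\norm{x_o}_2)$, Slater's condition holds (e.g.\ $x=x_o$ is strictly feasible), so the KKT conditions are necessary and sufficient: there is a multiplier $\mu\ge 0$ with $0\in\partial\norm{\Ac(\xs)}_* + \mu\,\partial\big(\norm{\xs-x_o}_2-\ls\big)$ and $\mu\big(\norm{\xs-x_o}_2-\ls\big)=0$. The first step would be to argue that the constraint is active, $\norm{\xs-x_o}_2=\ls$: otherwise $\xs$ would be an unconstrained minimizer of $\norm{\Ac(\cdot)}_*$ and hence (by convexity and $\Ac(0)=0$) attain the value $0$, forcing $\Ac(\xs)=0$ and thus $\xs=0$ for injective $\Ac$ — impossible since $\norm{0-x_o}_2=\norm{x_o}_2>\ls$. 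In particular $\xs\neq x_o$, so $\norm{\cdot}_2$ is differentiable there with $\partial\norm{\xs-x_o}_2=\{(\xs-x_o)/\ls\}$.

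Substituting this gradient into the stationarity condition gives $\tfrac{\mu}{\ls}(x_o-\xs)\in\partial\norm{\Ac(\xs)}_*$. The next step would be to invoke the subdifferential chain rule for the composition of the convex function $\norm{\cdot}_*$ with the linear map $\Ac$, $\partial\norm{\Ac(\xs)}_* = \Ac^*\!\left(\partial\norm{X}_*\big|_{X=\Ac(\xs)}\right)$, together with the stated form of $\partial\norm{X}_*$. This produces a matrix $W=W^\perp$ with $U_{\ls}^TW^\perp=W^\perp V_{\ls}=0$, $\norm{W^\perp}\le 1$, and $\Ac^*(U_{\ls}V_{\ls}^T+W^\perp)=\tfrac{\mu}{\ls}(x_o-\xs)$; by the definition (\ref{subgr}) the left-hand side is exactly $\aperp$, so $\aperp=\gamma(x_o-\xs)$ with $\gamma:=\mu/\ls$. (As a consistency check, plugging this into $\dg$ at $\lambda=\ls$ and using $\norm{\xs-x_o}_2=\ls$ reproduces $d_{\ls}(\ls,W^\perp)=0$, matching Remark \ref{rem}.)

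It then remains to show $\gamma>0$, i.e.\ $\mu>0$. If $\mu=0$ the stationarity condition reads $0\in\partial\norm{\Ac(\xs)}_*$, so $\xs$ is again an unconstrained minimizer of $\norm{\Ac(\cdot)}_*$ and $\Ac(\xs)=0$, which was ruled out above; hence $\mu>0$. Equivalently, since $\norm{\Ac(\xs)}_*$ strictly exceeds the unconstrained optimal value, $\Ac(\xs)\neq 0$, the matrix subdifferential does not contain the origin, and the multiplier cannot vanish.

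The step I expect to require the most care is the strict positivity of $\gamma$ (equivalently of the Lagrange multiplier): it relies on $\xs$ not being an unconstrained minimizer of $\norm{\Ac(\cdot)}_*$, which is the natural non-degeneracy underlying the regularization-path setup on $(0,\norm{x_o}_2)$ in (\ref{lambda}) and holds in particular whenever $\Ac$ is injective. A secondary technical point is the nonsmoothness of $x\mapsto\norm{x-x_o}_2$ at $x=x_o$, which is exactly why establishing activeness of the constraint (hence $\xs\neq x_o$) is carried out before differentiating.
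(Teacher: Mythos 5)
Your proposal is correct and follows essentially the same route as the paper: both derive the result from the first-order (Lagrangian/KKT) optimality conditions of (\ref{problem1}) at $\lambda=\ls$, compute the gradient of the constraint term explicitly, and read off $W^\perp$ from the subdifferential of the nuclear norm composed with the linear map via the chain rule. You are in fact somewhat more careful than the paper's own proof, which tacitly assumes the constraint is active (so that $\xs\neq x_o$ and the norm is differentiable there) and only produces a multiplier $z\geq 0$ (hence $\gamma\geq 0$); your justification of activity and of strict positivity $\gamma>0$ relies on injectivity of the linear map $\mathcal{A}$ (true for the Hankel operator used in the paper), a nondegeneracy assumption the paper never states but implicitly needs.
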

\begin{proof}
The proof is in the Appendix.
\end{proof}

\begin{lemma} \label{lem:bound}
\begin{equation} \label{bound}
\norm{\xs-\x}_2^2 \leq \lambda^2-(\ls)^2.
\end{equation}
\end{lemma}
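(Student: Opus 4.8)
The plan is to apply the subgradient inequality~(\ref{ineq}) with the particular $W^\perp$ supplied by Lemma~\ref{lem:perp}, use the fact that enlarging the feasible ball cannot raise the optimal cost, and then finish by an elementary expansion of $\norm{\xs-\x}_2^2$. The bound is only meaningful for $\lambda\geq\ls$, which is precisely the case of interest, namely $\xs$ approximating $\x$ for $\ls<\lambda$.

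First I would take $W=W^\perp$ in~(\ref{ineq}) and evaluate it at $x=\x$. By~(\ref{hperp}) this $W^\perp$ satisfies $\aperp=\gamma\,(x_o-\xs)$ for some $\gamma>0$, so the inequality becomes
\begin{equation*}
\norm{\Ac(\x)}_* \;\geq\; \norm{\Ac(\xs)}_* + \gamma\,(x_o-\xs)^T(\x-\xs).
\end{equation*}
Since $\lambda\geq\ls$, the constraint set $\{x:\norm{x-x_o}_2\leq\lambda\}$ of~(\ref{problem1}) contains $\{x:\norm{x-x_o}_2\leq\ls\}$, and therefore the optimal values obey $\norm{\Ac(\x)}_*\leq\norm{\Ac(\xs)}_*$. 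Combining these two facts and dividing by $\gamma>0$ yields
\begin{equation*}
(x_o-\xs)^T(\x-\xs)\;\leq\;0 .
\end{equation*}

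Next I would introduce $u:=\xs-x_o$ and $v:=\x-x_o$, so that the last inequality reads $\norm{u}_2^2\leq u^Tv$, and expand
\begin{equation*}
\norm{\xs-\x}_2^2=\norm{u-v}_2^2=\norm{u}_2^2-2\,u^Tv+\norm{v}_2^2\;\leq\;\norm{v}_2^2-\norm{u}_2^2 .
\end{equation*}
Feasibility of $\x$ for parameter $\lambda$ gives $\norm{v}_2^2\leq\lambda^2$; the norm constraint is active at the optimum $\xs$ for $\ls\in(\lambda_\text{min},\lambda_\text{max})$, so $\norm{u}_2^2=(\ls)^2$. Substituting these yields $\norm{\xs-\x}_2^2\leq\lambda^2-(\ls)^2$.

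The step I expect to need the most care is the last one: the bound as stated requires $\norm{\xs-x_o}_2=\ls$ with equality, not merely $\leq\ls$. This holds because an inactive constraint at $\xs$ would force $\xs$ to be an unconstrained minimizer of $\norm{\Ac(\cdot)}_*$ lying strictly inside the ball of radius $\norm{x_o}_2$, a degenerate situation excluded on the path domain~(\ref{lambda}). Everything else --- the subgradient inequality~(\ref{ineq}), the monotonicity of the optimal value in $\lambda$, and the expansion of $\norm{u-v}_2^2$ --- is routine.
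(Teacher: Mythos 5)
Your proof is correct and follows essentially the same route as the paper: both rest on Lemma~\ref{lem:perp} together with the half-space condition $\aperp^T\left(\x-\xs\right)\leq 0$ that follows from (\ref{ineq}) and the monotonicity of the optimal value in $\lambda$, plus tightness of the constraint at $\xs$. The only difference is presentational: you finish by expanding $\norm{\xs-\x}_2^2$ algebraically, while the paper performs the same maximization geometrically via the law of cosines.
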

\begin{proof}
Due to the existence of $\aperp$ in (\ref{hperp}), $\x$ is constrained by the convex set
$$
\Cperp := \left\lbrace x: \norm{x-x_o}_2 \leq \lambda, \aperp^T\left(x-\xs\right)\leq 0 \right\rbrace,
$$
so an upper bound is $\underset{x\in\Cperp}{\text{max}}\norm{\xs-x}_2^2$. This maximum can be solved geometrically. Since $\xs$ is inside the ball of the first constraint of $\Cperp$, this constraint has to be tight at the optima. Furthermore, with the first constraint being tight, the vectors $\left( x-x_o \right)$, $\left( \xs-x_o \right)$, and $\left( \xs-x \right)$ form a triangle, with $\norm{x-x_o}_2=\lambda$ and $\norm{\xs-x_o}_2=\ls$, so
$$
\lambda^2 = \norm{\xs-x}_2^2 + (\ls)^2 - 2\norm{\xs-x}_2\ls \cos(\frac{\pi}{2} + v)
$$
according to the law of cosines, where $v\geq 0$ is the angle between $\xs-x$ and the hyperplane
$$\left\lbrace x: \aperp^T\left(x-\xs\right) = 0 \right\rbrace.$$
This expression is maximized for $v=0$ giving the result $\norm{\xs-x^\text{opt}}_2^2 = \lambda^2-(\ls)^2$. (In fact, $v=0$ implies that the second constraint in $\Cperp$ is also tight.)
\end{proof}

Combining (\ref{firstbound}), (\ref{normE}), and (\ref{bound}), we obtain the following upper bound on the approximation error in (\ref{err_s}):
\begin{theorem}
The approximation error in (\ref{err_s}) is upper bounded by the function $\sg$:
\begin{equation} \label{sg}
\begin{aligned}
&\sum\limits_i^p \left(\sigma^{\ls}_i - \sigma^\lambda_i \right)^2 \leq \\
&\leq \min  \left\lbrace \norm{\sigma^{\ls}-\norm{\Ac\left(\xs\right)}_*e_{i^\text{min}}}_2^2 , C_\Ac\left(\lambda^2-(\ls)^2\right)\right\rbrace \\
&=: \sg.
\end{aligned}
\end{equation}
\end{theorem}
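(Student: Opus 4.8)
The plan is to obtain $\sg$ as the minimum of two independently valid upper bounds on $\sum_i^p\left(\sigma_i^{\ls}-\sigma_i^\lambda\right)^2$, each already essentially established in the preceding lemmas, and whose strengths are complementary (the first is sharp when $\lambda$ is far above $\ls$, the second when $\lambda$ is close to $\ls$).

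First I would simply quote the lemma containing (\ref{firstbound}), which for every $\lambda>\ls$ already gives
\[
\sum_i^p\left(\sigma^{\ls}_i - \sigma^\lambda_i \right)^2 \leq \norm{\sigma^{\ls}-\norm{\Ac\left(\xs\right)}_*e_{i^\text{min}}}_2^2 ,
\]
the point being that in this regime $\norm{\Ac(\x)}_*\leq\norm{\Ac(\xs)}_*$, so the error is maximized by concentrating all of $\sigma^\lambda$ in its smallest-index-of-$\sigma^{\ls}$ component.

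Second, I would chain the three displayed facts assembled just before the statement. Start from the perturbation identity (\ref{pert}), $\Ac(\xs)=\Ac(\x)+E$ with $E=\Ac(\xs-\x)$, which is legitimate because $\Ac$ is linear. Mirsky's theorem then yields $\sum_i^p\left(\sigma^{\ls}_i-\sigma^\lambda_i\right)^2\leq\norm{E}_F^2$. Apply the norm-equivalence estimate (\ref{normE}), $\norm{E}_F^2\leq C_\Ac\norm{\xs-\x}_2^2$, the finite constant $C_\Ac$ existing by equivalence of norms on the finite-dimensional space $\RR^{p\times p}$. Finally substitute the geometric bound of Lemma~\ref{lem:bound}, $\norm{\xs-\x}_2^2\leq\lambda^2-(\ls)^2$, which rests on Lemma~\ref{lem:perp} (the existence of $W^\perp$ makes $\x$ lie in the reduced feasible set $\Cperp$). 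Composing these four inequalities gives the second bound, $C_\Ac\left(\lambda^2-(\ls)^2\right)$.

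Since both quantities upper-bound the same error for all $\lambda>\ls$, their minimum $\sg$ in (\ref{sg}) is again an upper bound, which is exactly the claim in (\ref{err_s}). I do not expect a genuine obstacle: the substantive work has already been carried out in Lemmas~\ref{lem:perp} and \ref{lem:bound} and in the lemma for (\ref{firstbound}), so the theorem is a synthesis step. The only mild point worth noting in the write-up is that both component bounds are derived under the assumption $\lambda>\ls$, which is precisely the regime in which $\x$ is approximated by $\xs$ in Algorithm~2, so forming the pointwise minimum is justified over the whole relevant range of $\lambda$.
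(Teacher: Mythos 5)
Your proposal is correct and matches the paper's own proof: the first argument of the minimum is taken directly from the lemma giving (\ref{firstbound}), and the second is obtained by chaining (\ref{pert}), Mirsky's theorem, (\ref{normE}), and Lemma~\ref{lem:bound}, after which the pointwise minimum of the two bounds is again a bound. No gaps to report.
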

\begin{proof}
The first argument in the $\min$ is given by (\ref{firstbound}). The second is obtained by combining (\ref{normE}) and (\ref{bound}).
\end{proof}

\section{Algorithms}

\subsection{Model order reduction}

In model order reduction, and approximative filter design, the aim is to reduce a high-order model description to a low-order model while preserving the properties according to some fit criterion.

We consider a known Finite Impulse Response (FIR) model of a stable scalar discrete-time linear time-invariant dynamical system, denoted by $g_o\in\RR^n$, which is a vector containing its impulse response coefficients. Furthermore, we denote the low-order candidates by $g$, and consider the $H_2$ model fit criterion $\norm{g-g_o}_2 \leq \lambda$. Note that other criteria commonly used in model order reduction are the $H_\infty$- and Hankel norm-criteria (see \cite{Antoulas:2005} or \cite{Zhou-Doyle-Glover-96}), which are not considered here.

It can be shown \cite{Fazel-Hindi-Boyd-03} that the following Hankel matrix (here taken to be symmetric for simplicity)
\begin{equation} \label{hankel}
\Hc(g) := \begin{bmatrix}
g_1 & g_2 & \cdots & g_p \\
g_2 & g_3 & \cdots & g_{p+1} \\
\vdots & \vdots & \ddots & \vdots \\
g_p & g_{p+1} & \cdots & g_n 
\end{bmatrix},
\end{equation}
has the property that its rank is equal to the order (McMillan degree) of the dynamical system which has $g$ as impulse response. This motivates the Hankel matrix rank minimization problem to enforce a low system order.

Using the nuclear norm as surrogate for rank and the $H_2$ model fit criterion, we formulate the following special case of (\ref{problem1}):
\begin{equation} \label{probHank}
\begin{aligned}
& \underset{g}{\text{minimize}}
& & \norm{\Hc (g)}_* \\
& \text{subject to}
& & \norm{g-g_o}_2 \leq \lambda.
\end{aligned}
\end{equation}
Note that in this setting $\sigma\left(\Hc(g)\right)$ are the Hankel singular values of the system with $g$ as impulse response.

The adjoint of the Hankel operator in (\ref{hankel}), $\Hc^*(X)$, maps matrices $X\in\RR^{p\times p}$ to vectors $x\in\RR^n$, by summing the anti-diagonals of $X$, i.e.,
\begin{equation} \label{Hadj}
\Hc^*(X) = x; \quad x_k = \sum_{i+j=k+1}X_{ij}.
\end{equation}

\subsection{The algorithms}

The algorithms are outlined in Algorithm 1 and 2. The idea is to adaptively choose a set of discretization points, for which problem (\ref{problem1}) is solved. In the intermediate intervals the regularization path is approximated by the previous solution (obtained on the infimum of the current interval). The resulting approximation errors are upper bounded in (\ref{dg}) for Algorithm 1 and (\ref{sg}) for Algorithm 2. The discretization points are chosen as the values of $\lambda$ for which the upper bound reaches a pre-specified error tolerance, $\varepsilon$. This is visualised in Figure 2.

Note that in Algorithm 1, $\dg$ depends on $W$. For simplicity, we can set $W=0$, but in Section \ref{implem} we also demonstrate how to optimize $\dg$ over $W$. Also note that for a Hankel matrix the quantity $\Cc_\Ac=n$ satisfies (\ref{sg}).

\begin{algorithm}                    
\begin{algorithmic}                    
	\State \textbf{Algorithms 1 and 2.} Approximate regularization paths.
	\State \textbf{Input: } $g_o, \varepsilon$.
	\State \textbf{Output: } Approximate regularization paths such that errors (\ref{err_d}) $\leq \varepsilon$ (Alg. 1) or (\ref{err_s}) $\leq \varepsilon$ (Alg. 2) for $\lambda = [0,\lambda_\text{max}]$.
\State Initialize $i=0$. Set $\ls_0=0$.
    \While{$\ls_i \leq \lambda_\text{max}$}
    	\State Solve (\ref{problem1}) for $\lambda=\ls_i$, giving $\xsi \rightarrow \sigma^{\ls_i} = \sigma(\Ac(\xsi))$.
        \State Solve $\ls_{i+1}$ from $d_{\ls_i}(\ls_{i+1},W=0)=\varepsilon$ (Algorithm 1) or $s_{\ls_i}(\ls_{i+1})=\varepsilon$ (Algorithm 2).
        \State Accept $\xsi$ as approximate solution for $\lambda = [\ls_i,\ls_{i+1})$.
        \State Set $i = i + 1$.
    \EndWhile
\end{algorithmic}
\end{algorithm}

\subsubsection{Number of evaluations for Algorithm 1} \label{cfrp_A}

Here we bound the number of evaluations of (\ref{problem1}), i.e., the number of iterations of the above algorithm needed to guarantee the error (\ref{err_d}) within the tolerance $\varepsilon$.
\begin{theorem} \label{Malg1}
The number of evaluations of (\ref{problem1}) needed by Algorithm 1 is at most
\begin{equation}  \label{mmax1gen}
\Monegen \leq \left\lfloor \frac{2c_n\norm{g_o}_2}{\varepsilon} \right\rfloor = \mathcal{O}(\varepsilon^{-1}),
\end{equation}
in general, and if $W=0$:
\begin{equation} \label{mmax1}
\Mone \leq \left\lfloor \frac{c_n\norm{g_o}_2}{\varepsilon} \right\rfloor = \mathcal{O}(\varepsilon^{-1}),
\end{equation}
where $\varepsilon$ is the tolerance and
$$
c_n := \left( 2\sum\limits_{k=1}^{p-1}k^2 + p^2 \right)^\frac{1}{2} = \norm{\Hc^*(\mathbb{1}_{p\times p})}_2,
$$
in which $n = 2p-1$, $\mathbb{1}_{p\times p}$ is a $(p\times p)$-matrix of ones, and the adjoint of the Hankel operator is defined in (\ref{Hadj}).
\end{theorem}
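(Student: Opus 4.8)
The plan is to bound the size of each step $\ls_{i+1}-\ls_i$ of the \texttt{while}-loop below by a fixed amount $\delta>0$, and then to divide the length of the regularization-path domain by $\delta$. By \eqref{lambda} the grid points $\ls_0=0<\ls_1<\dots$ that get evaluated are exactly those lying in $[0,\lambda_\text{max}]$ with $\lambda_\text{max}=\norm{g_o}_2$, so the number of evaluations equals the number of such points; a uniform advance $\ls_{i+1}-\ls_i\ge\delta$ then forces this number to be at most $\lambda_\text{max}/\delta$ and, since it is an integer, at most $\lfloor\lambda_\text{max}/\delta\rfloor$. I aim for $\delta=\varepsilon/(2c_n)$ in general and $\delta=\varepsilon/c_n$ when $W=0$, which are precisely \eqref{mmax1gen} and \eqref{mmax1}.

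For the per-step advance, observe from \eqref{dg} that, for a fixed base point $\ls_i$ and fixed $W$, the map $\lambda\mapsto d_{\ls_i}(\lambda,W)$ is affine in $\lambda$ with slope $\norm{a_{\ls_i}(W)}_2$, and the algorithm chooses $\ls_{i+1}$ as the point where it equals $\varepsilon$. Its value at $\lambda=\ls_i$, namely $\ls_i\norm{a_{\ls_i}(W)}_2-a_{\ls_i}(W)^T(x_o-\xsi)$, is nonnegative by Cauchy--Schwarz together with feasibility $\norm{x_o-\xsi}_2\le\ls_i$, and it is $0$ for $\ls_0=0$ (there $\xsi=x_o$) and, in general, for the subgradient component $W=W^\perp$ from Lemma~\ref{lem:perp} and \eqref{hperp} (the tightness pointed out in Remark~\ref{rem}). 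Taking this constant term to be $0$ yields $\ls_{i+1}-\ls_i=\varepsilon/\norm{a_{\ls_i}(W)}_2$, so everything reduces to an upper bound on the slope.

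To bound the slope, write $a_{\ls_i}(W)=\Hc^*(U_{\ls_i}V_{\ls_i}^T+W)$ as in \eqref{subgr}. For any $X\in\RR^{p\times p}$, by \eqref{Hadj} the $k$-th component of $\Hc^*(X)$ is the Frobenius inner product of $X$ with the $0/1$ matrix $S_k$ selecting the $k$-th anti-diagonal, so by spectral/nuclear-norm duality $|\langle X,S_k\rangle|\le\norm{X}\,\norm{S_k}_*=\ell_k\norm{X}$, where $\ell_k$ is the length of that anti-diagonal and $\norm{X}$ the spectral norm. Squaring and summing over $k$ gives $\norm{\Hc^*(X)}_2\le\norm{X}\bigl(\sum_k\ell_k^2\bigr)^{1/2}=c_n\norm{X}$: for a $p\times p$ matrix ($n=2p-1$) the anti-diagonal lengths are $1,\dots,p-1,p,p-1,\dots,1$, so $\sum_k\ell_k^2=2\sum_{k=1}^{p-1}k^2+p^2$, and these lengths are exactly the entries of $\Hc^*(\mathbb{1}_{p\times p})$, whence $c_n=\norm{\Hc^*(\mathbb{1}_{p\times p})}_2$. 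Since $U_{\ls_i}V_{\ls_i}^T$ is a partial isometry ($\norm{U_{\ls_i}V_{\ls_i}^T}\le 1$) and $\norm{W}\le 1$, the triangle inequality gives $\norm{a_{\ls_i}(W)}_2\le 2c_n$ in general and $\norm{a_{\ls_i}(0)}_2\le c_n$ when $W=0$. Combining with the preceding paragraph gives $\delta=\varepsilon/(2c_n)$, resp.\ $\varepsilon/c_n$, and the first paragraph concludes.

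The step I expect to be the main obstacle is the claim that $d_{\ls_i}(\ls_i,W)$ may be set to $0$ for the $W$ the algorithm actually uses. When $W=W^\perp$ is available this follows from \eqref{hperp} and the active ball constraint $\norm{x_o-\xsi}_2=\ls_i$; but for the plain choice $W=0$ beyond the first grid point one must argue it from the optimality conditions of \eqref{problem1} at $\ls_i$ (or carry a residual term and show it does not accumulate over the iterations), and pushing the resulting bound down to the exact floor in \eqref{mmax1gen}--\eqref{mmax1} calls for a little extra care in the counting.
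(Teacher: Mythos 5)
Your proposal follows essentially the same route as the paper's proof: lower-bound each step of Algorithm~1 by $\varepsilon/(2c_n)$ (resp.\ $\varepsilon/c_n$) using the affine dependence of $\dg$ on $\lambda$, the slope bound $\norm{\asW}_2\leq 2c_n$ (or $c_n$ for $W=0$) obtained from spectral/nuclear-norm duality applied to the anti-diagonal selector matrices, and the vanishing of the constant offset via the subgradient $W^\perp$ of Lemma~\ref{lem:perp}, then divide $\lambda_\text{max}=\norm{g_o}_2$ by the minimal step length and take the floor. The obstacle you flag for the $W=0$ case --- that $d_{\ls_i}(\ls_i,W=0)$ need not vanish, so the actual step $\bigl(\varepsilon-d_{\ls_i}(\ls_i,W=0)\bigr)/\norm{a_{\ls_i}(0)}_2$ can be shorter than $\varepsilon/c_n$ --- is genuinely present and is not resolved in the paper either: its proof removes the offset only through $W^\perp$ and then simply ``sets $W=0$'' in the norm bound (\ref{proofineq}), implicitly assuming $\varepsilon$ exceeds the nonzero $\varepsilon^\text{min}$ reported in Table~1; so your account reproduces the paper's argument, including its unaddressed step for (\ref{mmax1}).
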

\begin{proof} The proof is in the Appendix.
\end{proof}

\subsubsection{Number of evaluations for Algorithm 2} \label{svrp_A}

Now, we bound the number of evaluations of (\ref{problem1}), i.e., the number of iterations of Algorithm 2 needed to guarantee the solution within the tolerance $\varepsilon$.

\begin{theorem} \label{Malg2}
The number of evaluations of (\ref{problem1}) needed by Algorithm 2, i.e., the number of iterations, is at most
\begin{equation} \label{mmax2}
\Mtwo \leq \left\lfloor \frac{n\norm{g_o}_2^2}{\varepsilon} \right\rfloor = \mathcal{O}(\varepsilon^{-1}).
\end{equation}
\end{theorem}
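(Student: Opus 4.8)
The plan is to bound the number of grid points by summing the interval lengths $\ls_{i+1}-\ls_i$ produced by the rule $s_{\ls_i}(\ls_{i+1})=\varepsilon$ and showing the total cannot exceed $\lambda_\text{max}=\norm{g_o}_2$ unless the iteration count is controlled. Concretely, I would work with the second argument of the $\min$ in the definition of $\sg$, namely $C_\Ac(\lambda^2-(\ls)^2)$ with $C_\Ac=n$ for the Hankel case. Since $\sg$ is the minimum of two terms, whenever the stopping equation $s_{\ls_i}(\ls_{i+1})=\varepsilon$ is hit we certainly have $n\left((\ls_{i+1})^2-(\ls_i)^2\right)\ge\varepsilon$, i.e.\ each step advances the \emph{squared} parameter by at least $\varepsilon/n$. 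Because $\ls_0=0$ and the algorithm terminates once $\ls_i>\lambda_\text{max}$, the squared parameter travels a total distance of at most $\lambda_\text{max}^2=\norm{g_o}_2^2$, so the number of steps is at most $\lfloor \norm{g_o}_2^2 / (\varepsilon/n)\rfloor = \lfloor n\norm{g_o}_2^2/\varepsilon\rfloor$, which is exactly \eqref{mmax2}.

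The key steps, in order: (i) recall from \eqref{sg} that $s_{\ls_i}(\lambda)\le C_\Ac(\lambda^2-(\ls_i)^2)$ with $C_\Ac=n$ admissible for the Hankel operator (as already noted after the algorithm box, $\Cc_\Ac=n$ satisfies \eqref{sg}, and in fact $n=c_n^2$ when one writes $c_n^2=2\sum_{k=1}^{p-1}k^2+p^2$; I would double-check this identity, since $\sum_{k=1}^{p-1}k^2 = (p-1)p(2p-1)/6$ and $2\sum_{k=1}^{p-1}k^2+p^2$ should collapse to $2p-1=n$ only if the intended constant is this sum — this bookkeeping is the one place to be careful); (ii) conclude that the stopping condition $s_{\ls_i}(\ls_{i+1})=\varepsilon$ forces $n\left((\ls_{i+1})^2-(\ls_i)^2\right)\ge\varepsilon$; (iii) telescope: if the algorithm performs $M$ evaluations at parameters $\ls_0=0<\ls_1<\dots<\ls_{M-1}\le\lambda_\text{max}$, then $\varepsilon(M-1)\le \sum_{i=0}^{M-2} n\left((\ls_{i+1})^2-(\ls_i)^2\right) = n\left((\ls_{M-1})^2-0\right)\le n\lambda_\text{max}^2 = n\norm{g_o}_2^2$; (iv) solve for $M$ and apply the floor, using \eqref{lambda} for $\lambda_\text{max}=\norm{g_o}_2$.

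I expect the main obstacle to be purely the constant-tracking in step (i): making sure the value $C_\Ac=n$ really is valid in \eqref{normE}–\eqref{sg} for the Hankel operator (i.e.\ that $\norm{\Hc(v)}_F^2\le n\norm{v}_2^2$ for all $v\in\RR^n$), and reconciling it with the expression $c_n^2$ appearing in Theorem~\ref{Malg1}. This is a finite-dimensional norm-equivalence estimate — bounding the largest singular value of the linear map $\Hc$ — and once it is pinned down the rest of the argument is the short telescoping computation above. A secondary, minor point is the off-by-one in the floor: the algorithm may "overshoot" on the last step (accepting $\xsi$ on $[\ls_i,\ls_{i+1})$ with $\ls_{i+1}>\lambda_\text{max}$), so one should count evaluations at $\ls_0,\dots,\ls_{M-1}$ with $\ls_{M-1}\le\lambda_\text{max}$, which is exactly what the telescoping above uses and yields the stated floor.
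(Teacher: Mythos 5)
Your proposal is correct and follows essentially the same route as the paper: both work with the second argument of the $\min$ in \eqref{sg} with $C_\Ac=n$ (valid since $\norm{\Hc(v)}_F^2\leq p\norm{v}_2^2\leq n\norm{v}_2^2$, and the identity $n=c_n^2$ you worried about is neither true nor needed---$c_n$ belongs only to Theorem~\ref{Malg1}), concluding that each step advances $\lambda^2$ by at least $\varepsilon/n$ and counting steps over $[0,\norm{g_o}_2^2]$. Your telescoping inequality is just a mild repackaging of the paper's explicit recursion $\ls_{i+1}=(\varepsilon/n+(\ls_i)^2)^{1/2}$ solved by induction, so the two arguments coincide in substance.
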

\begin{proof} The proof is in the Appendix.
\end{proof}

\section{Implementation} \label{implem}
  
\begin{table*}[t]
\centering
\begin{tabular}{lrlrrr|rrr|rr}
&&&&&&\multicolumn{3}{|c|}{Algorithm 1}&\multicolumn{2}{|c}{Algorithm 2} \\
\text{benchmark} & order & $T_s$ & $n$ & \text{cpu ADMM} & $\varepsilon^\text{min}/J^\text{max}$ & $\varepsilon/J^\text{max}$ & $m$ & $M^{W=0}$ & $M$ & $m$ \\
\hline
\multirow{2}{*}{$\tt{beam.mat}$} & \multirow{2}{*}{348} & \multirow{2}{*}{1} & \multirow{2}{*}{1047} & \multirow{2}{*}{134.23} & \multirow{2}{*}{0.1233} & 0.2 & 5 & 3062 & 30 & 10\\
& & & & & & 0.3 & 3 & 4082 & 20 & 5\\
\hline
\multirow{2}{*}{$\tt{build.mat}$} & \multirow{2}{*}{48} & \multirow{2}{*}{0.025} & \multirow{2}{*}{576} & \multirow{2}{*}{317.86} & \multirow{2}{*}{0.1607} & 0.2 & 7 & 1035 & 30 & 10 \\
& & & & && 0.3& 4 & 1379 & 20 & 5 \\
\hline
\multirow{2}{*}{$\tt{eady.mat}$} & \multirow{2}{*}{598} & \multirow{2}{*}{0.1} & \multirow{2}{*}{196} &\multirow{2}{*}{7.36} & \multirow{2}{*}{0.0958} & 0.2 & 5 & 643 & 30 & 9 \\
& & & & &&0.3 & 3 & 856 & 20 & 4 \\
\hline
\multirow{2}{*}{$\tt{heat-cont.mat}$} & \multirow{2}{*}{200} & \multirow{2}{*}{0.5} & \multirow{2}{*}{139} & \multirow{2}{*}{11.78} & \multirow{2}{*}{0.7270}& 0.2 & 5 & 482 & 30 & 12 \\
& & & & && 0.3 & 3 & 642 & 20 & 7 \\
\hline
\multirow{2}{*}{$\tt{pde.mat}$} & \multirow{2}{*}{84} & \multirow{2}{*}{0.0001} & \multirow{2}{*}{242} & \multirow{2}{*}{39.80}& \multirow{2}{*}{0.1054} &  0.2 & 5 & 1193 & 30 & 7 \\
& & & & && 0.3 & 3 & 795 & 20 & 3 \\
\end{tabular}
\caption{Results of Algorithm 1 and 2. $T_s$ is sampling time in Matlab's $\tt{c2d}$, giving impulse response lengths $n$. 'cpu ADMM' is an average time in seconds with a standard laptop for solving (\ref{probHank}) using ADMM. The maximum cost $J^\text{max}:=\norm{\Hc(g_o)}_*$. $m$ is number of grid points needed, with upper bounds $M$ (for Algorithm 1 we use (\ref{mmax1})). $\varepsilon^\text{min}$ is the minimum tolerance for which  $d_{\ls}(\ls,W=0)<\varepsilon^\text{min}$ for all $\ls$. For Algorithm 2, $\varepsilon=n\norm{g_o}_2^2/\Mtwo$.}
\end{table*}

For large scale problems (\ref{problem1}) we suggest an Alternating Direction Method of Multipliers (ADMM), c.f. \cite{Boyd:2010} and \cite{Yang:2012}. We will follow the method in \cite{Liu:2013} with a modification for the $g$-update in (\ref{gupd}) below.

First, we rewrite (\ref{probHank}) as
\begin{equation}
\begin{array}{cl}
\underset{g\in\RR^n,H\in\RR^{p\times p}}{\text{minimize}} & \norm{H}_* \\
\text{subject to} & \norm{g-g_o}_2 \leq \lambda \\
 & \Hc (g) = H.
\end{array}
\end{equation}
Next, we form the following augmented Lagrangian
\begin{align*}
L_\rho &(H,g,Z) = \\
&\norm{H}_* + \text{Tr}\left(Z^T(\Hc (g) - H)\right) + \frac{\rho}{2}\norm{\Hc (g) - H}_F^2.
\end{align*}
The strategy is to update the variables as
\begin{align}
H^{k+1}&:=\underset{H}{\text{arg min }}L_\rho(H,g^k,Z^k) \label{Hupd} \\
g^{k+1}&:=\underset{\lbrace g: \norm{g-g_o}_2\leq\lambda\rbrace}{\text{arg min }}L_\rho(H^{k+1},g,Z^k) \label{gupd} \\
Z^{k+1}&:=Z^k + \rho(\Hc (g^{k+1})-H^{k+1}).
\end{align}
The variables can be initialized e.g. as $H=0,g=0,Z=0,\rho=1$. (Initialize $\rho$ if it is adaptive as in \cite{Boyd:2010}).

The $H$ update in (\ref{Hupd}) is accomplished in \cite{Liu:2013} using so called 'singular value soft-thresholding':
\begin{align*}
H^{k+1} &= \underset{H}{\text{arg min }}L_\rho(H,g^k,Z^k) \\
&= \underset{H}{\text{arg min }}\left( \norm{H}_* + \frac{\rho}{2}\norm{H - \Hc (g^k) - (1/\rho)Z^k}_F^2 \right) \\
&= \sum\limits_{i=1}^p \max \left\lbrace 0, \sigma_i - \frac{1}{\rho} \right\rbrace u_iv_i^T,
\end{align*}
where $\sigma_i,u_i,v_i$ are given by the singular value decomposition
$$
\Hc (g^k) + \frac{1}{\rho}Z^k = \sum\limits_{i=1}^p \sigma_i u_i v_i^T.
$$

The second subproblem, (\ref{gupd}), we reformulate as
\begin{align*}
\begin{array}{cl}
\underset{x}{\text{minimize}} & \frac{\rho}{2} x^T P x + q^Tx \\
\text{subject to} & \norm{x}_2 \leq \lambda,
\end{array}
\end{align*}
where $x=g-g_o$, $P=\text{diag}(\Hc^* (\mathbb{1}_{p\times p}))$ and $q = \Hc^* (Z^k + \rho\Hc (g_o) - \rho H^{k+1})$, and $\Hc^*(\cdot)$ is defined in (\ref{Hadj}). This can be solved by using the facts that the optimal point, $x^\text{opt}$, lies on the boundary of the constraint set, and in this point the negative gradient of the cost function is normal to the constraint set, i.e., it is proportional to $x^\text{opt}$. This means that
\begin{align*}
\rho P x^\text{opt} + q = -tx^\text{opt} \Leftrightarrow x^\text{opt} = -(\rho P + tI)^{-1}q
\end{align*}
where $t\geq 0$ is a scalar determined from solving $f(t) := \norm{x^\text{opt}}_2 = \lambda$ using Newton's method. This $t$ is unique since
$$
f(t) = \norm{(\rho P + tI)^{-1}q}_2 = \left(\sum\limits_{i=1}^n \frac{q_i^2}{(t+\rho P_{ii})^2}\right)^{\frac{1}{2}},
$$
is a decreasing function with $f(0)>\lambda$ and $f(\infty)=0$. The fact that $f(0)>\lambda$ is true since $x^\text{opt}(t=0)$ is the global minimum, which is located outside the constraint set. Summing up, we obtain
\begin{equation}
g^{k+1} = g_o - (\rho P + tI)^{-1}q.
\end{equation}

The stopping criterion is $\norm{r_\text{p}^{k+1}}\leq \epsilon_\text{p}^{k+1}$ and $\norm{r_\text{d}^{k+1}}\leq \epsilon_\text{d}^{k+1}$, where the primal and dual residuals ($r_\text{p}$ and $r_\text{d}$) and tolerances ($\epsilon_\text{p}$ and $\epsilon_\text{d}$) are computed from the definition in \cite[Sec.~3]{Boyd:2010} as
\begin{align*}
r_\text{p}^{k+1} &:= \Hc (g^{k+1}) - H^{k+1} \\
r_\text{d}^{k+1} &:= \rho \Hc^* (H^k-H^{k+1}) \\
\epsilon_\text{p}^{k+1} &:= p\epsilon_\text{abs} + \epsilon_\text{rel} \max \left\lbrace \norm{\Hc (g^{k+1})}_F, \norm{H^{k+1}}_F \right\rbrace \\
\epsilon_\text{d}^{k+1} &:= \sqrt{n}\epsilon_\text{abs} + \epsilon_\text{rel} \norm{\Hc^*(Z^{k+1})}_2.
\end{align*}

\subsection{Frank-Wolfe algorithm for optimizing (\ref{dg}) over $W$}

The Frank-Wolfe algorithm (or \textit{conditional gradient method}) is a simple iterative method, suggested in \cite{Frank:1956} (1956) for minimizing convex, continuously differentiable functions $f$ over compact convex sets. We here design a Frank-Wolfe algorithm for optimizing (\ref{dg}) over $W$. Our algorithm is summarized in Algorithm 3.

To solve the argument minimizations at each iteration explicitly, we note that for the constraints in $\Mc$
$$
U^TW = 0 \Leftrightarrow W = U^\perp A,
$$
and
$$
WV = U^\perp A V = 0 \Leftrightarrow AV = 0 \Leftrightarrow A = D(V^\perp)^T,
$$
for some matrices $A$ and $D$ of appropriate size. Hence,
$$
W = U^\perp D(V^\perp)^T,
$$
where $\norm{W} \leq 1 \Rightarrow \norm{D} \leq 1$. Then, in Algorithm 3, we parameterize $X=U^\perp D(V^\perp)^T$, so that $\langle X,C \rangle = \langle U^\perp D(V^\perp)^T,C \rangle = \langle D,(U^\perp)^TCV^\perp \rangle =: \langle D,\tilde{C} \rangle$, and solve
$$
\underset{\norm{D}\leq 1}{\text{arg min}} \langle D,\tilde{C} \rangle,
$$
This problem has the closed form solution $D^{\text{opt}} = U_{\tilde{C}} V^T_{\tilde{C}}$
where $\tilde{C} = U_{\tilde{C}} \Sigma_{\tilde{D}} V^T_{\tilde{C}}$ is a compact singular value decomposition. Then,
$$
X^\text{opt} = U^\perp D^{\text{opt}}(V^\perp)^T = U^\perp U_{\tilde{C}} V^T_{\tilde{C}} (V^\perp)^T.
$$

Finally, when optimizing the duality gap (\ref{dg}) over $W$ for a fixed $\lambda$, we have
$$
C = \left.\frac{\partial \dg}{\partial W}\right|_{W=W^k} = \frac{\lambda}{\norm{\hs}_2}\Hc\left(\hs\right) + \Hc\left(\xs-g_o\right).
$$

\begin{algorithm}                    
\begin{algorithmic}                    
	\State \textbf{Algorithm 3.} Frank-Wolfe for optimizing (\ref{dg}) over $W$
	\State Initialize $W^0=0$
    \For{k = 0,1,...,K}
    	\State Compute $X^\text{opt} := \underset{X\in\Mc}{\langle X, C \rangle}$; $C=\left.\frac{\partial \dg}{\partial W}\right|_{W=W^k}$
        \State Update $W^{k+1} := (1-\gamma)W^k + \gamma X^\text{opt}$, for $\gamma=\frac{2}{2+k}$
    \EndFor
\end{algorithmic}
\end{algorithm}

\begin{figure}
\centering
\includegraphics[scale=0.7]{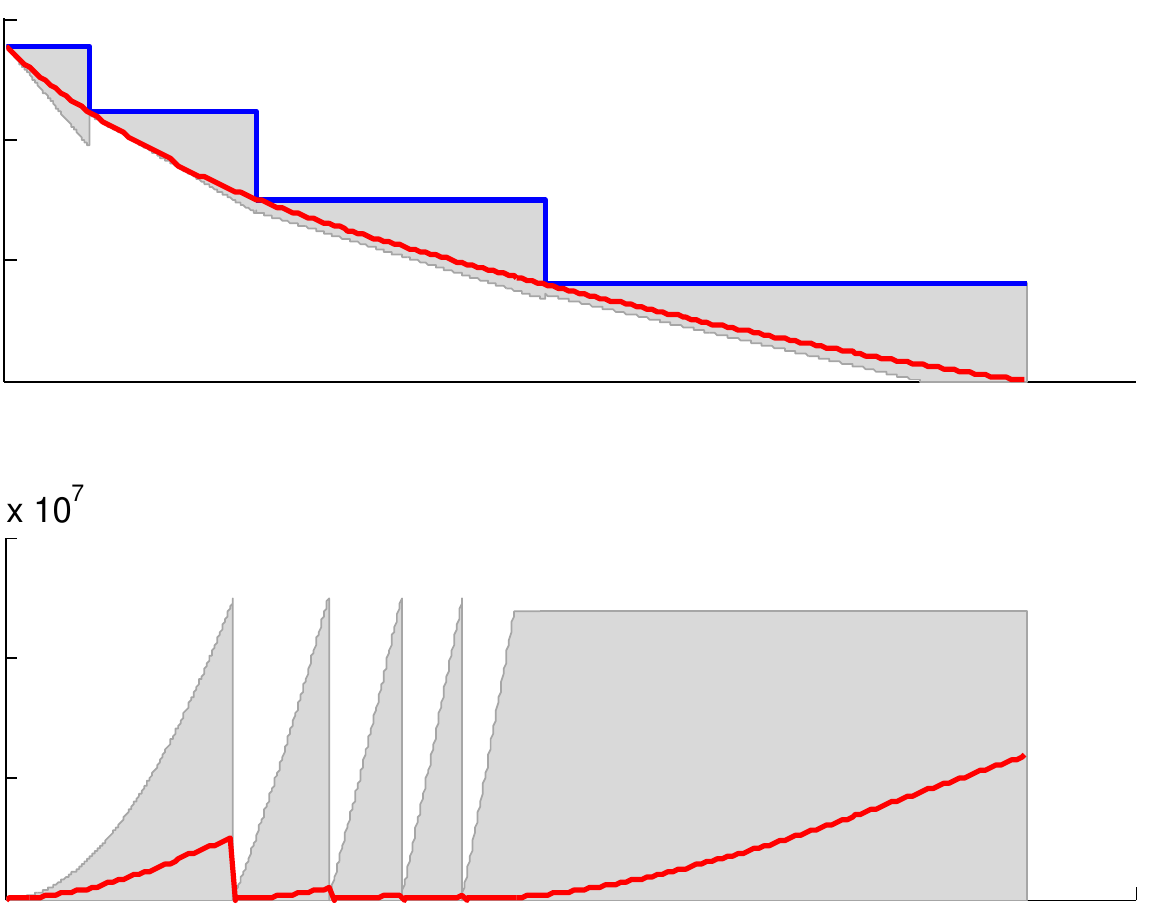}
\caption{True errors (red) and confidence zones (grey) for the model $\tt{build.mat}$. Upper: Algorithm 1 with approximate regularization path (blue) for $\varepsilon/J^\text{max} = 0.3$, where $J^\text{max}=\norm{\Hc(g_o)}_*$. Lower: Algorithm 2 for $\varepsilon=n\norm{g_o}_2^2/M$, where $M=30$.}
\end{figure}

\begin{figure}
\centering
\includegraphics[scale=0.65]{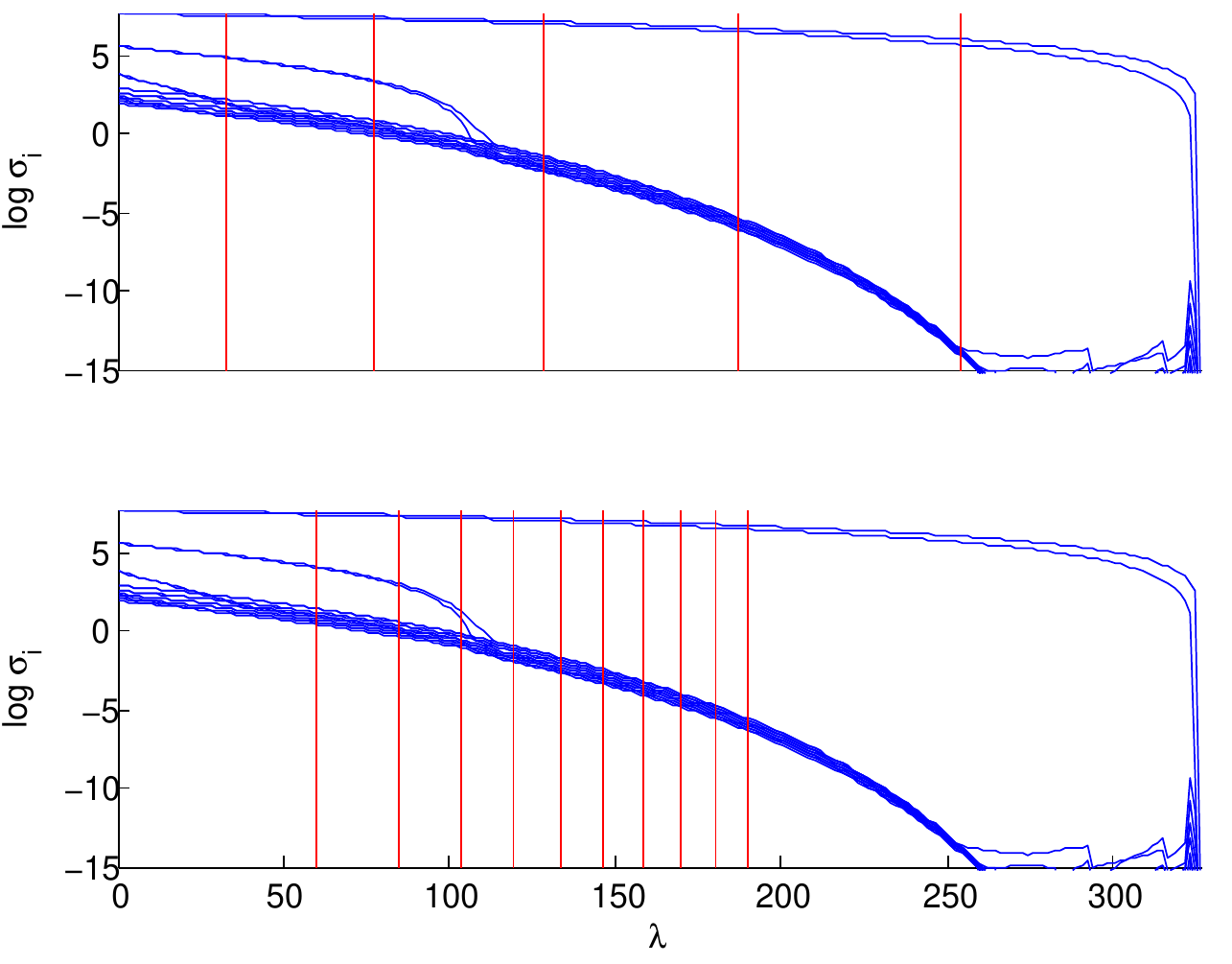}
\caption{Plot of significant singular values $\sigma_i,i=1,\ldots,17,$ for $\tt{beam.mat}$. Vertical lines indicate grid points. Upper: Algortihm 1 ($\varepsilon/J^\text{max} = 0.2$, where $J^\text{max}=\norm{\Hc(g_o)}_*$). Lower: Algorithm 2 ($\varepsilon=n\norm{g_o}_2^2/M$, where $M=40$).}
\end{figure}
\begin{figure}
\centering
\includegraphics[scale=0.65]{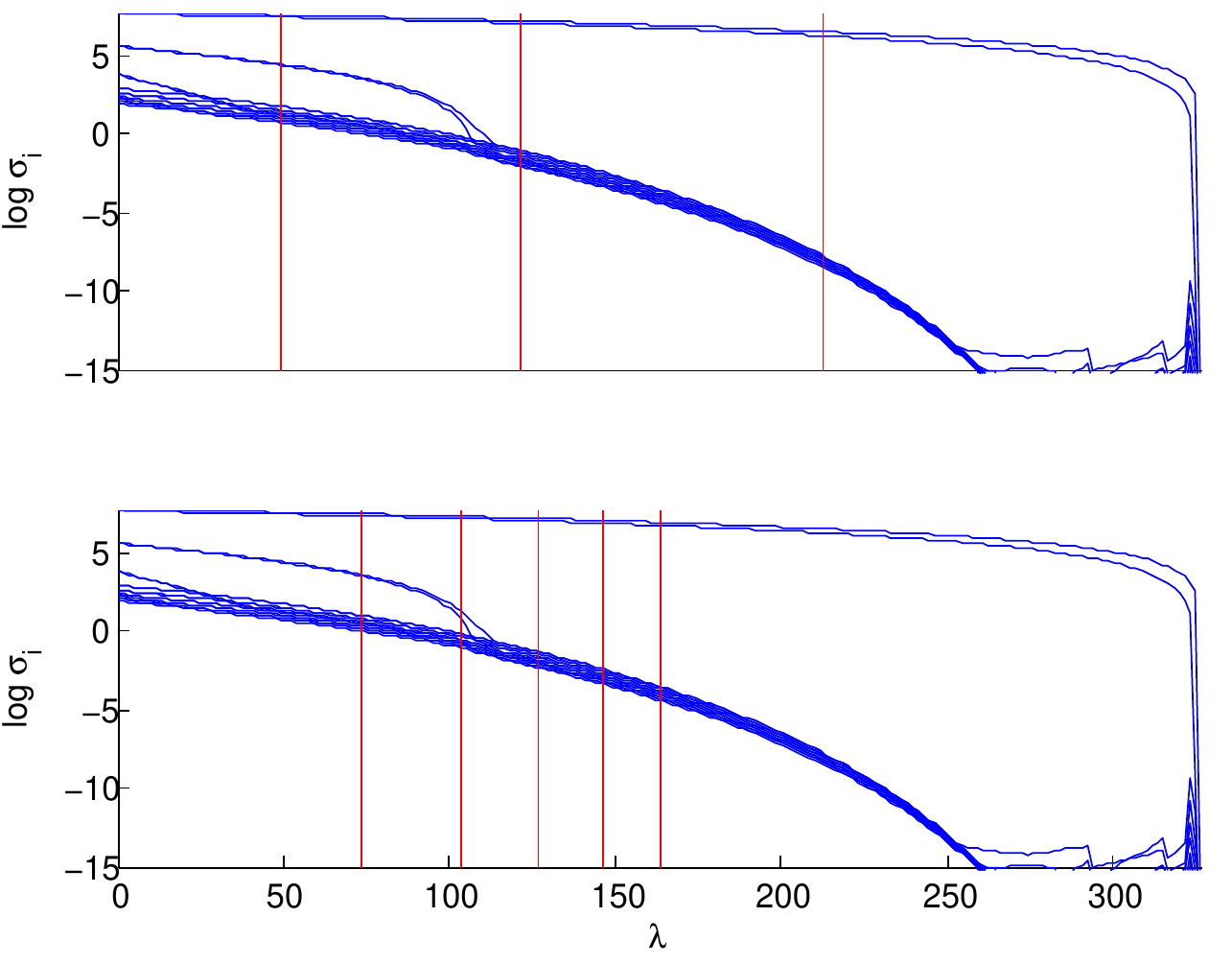}
\caption{Plot of significant singular values $\sigma_i,i=1,\ldots,17,$ for $\tt{beam.mat}$. Vertical lines indicate grid points. Upper: Algortihm 1 ($\varepsilon/J^\text{max} = 0.3$, where $J^\text{max}=\norm{\Hc(g_o)}_*$). Lower: Algorithm 2 ($\varepsilon=n\norm{g_o}_2^2/M$, where $M=30$).}
\end{figure}
\begin{figure}
\centering
\includegraphics[scale=0.65]{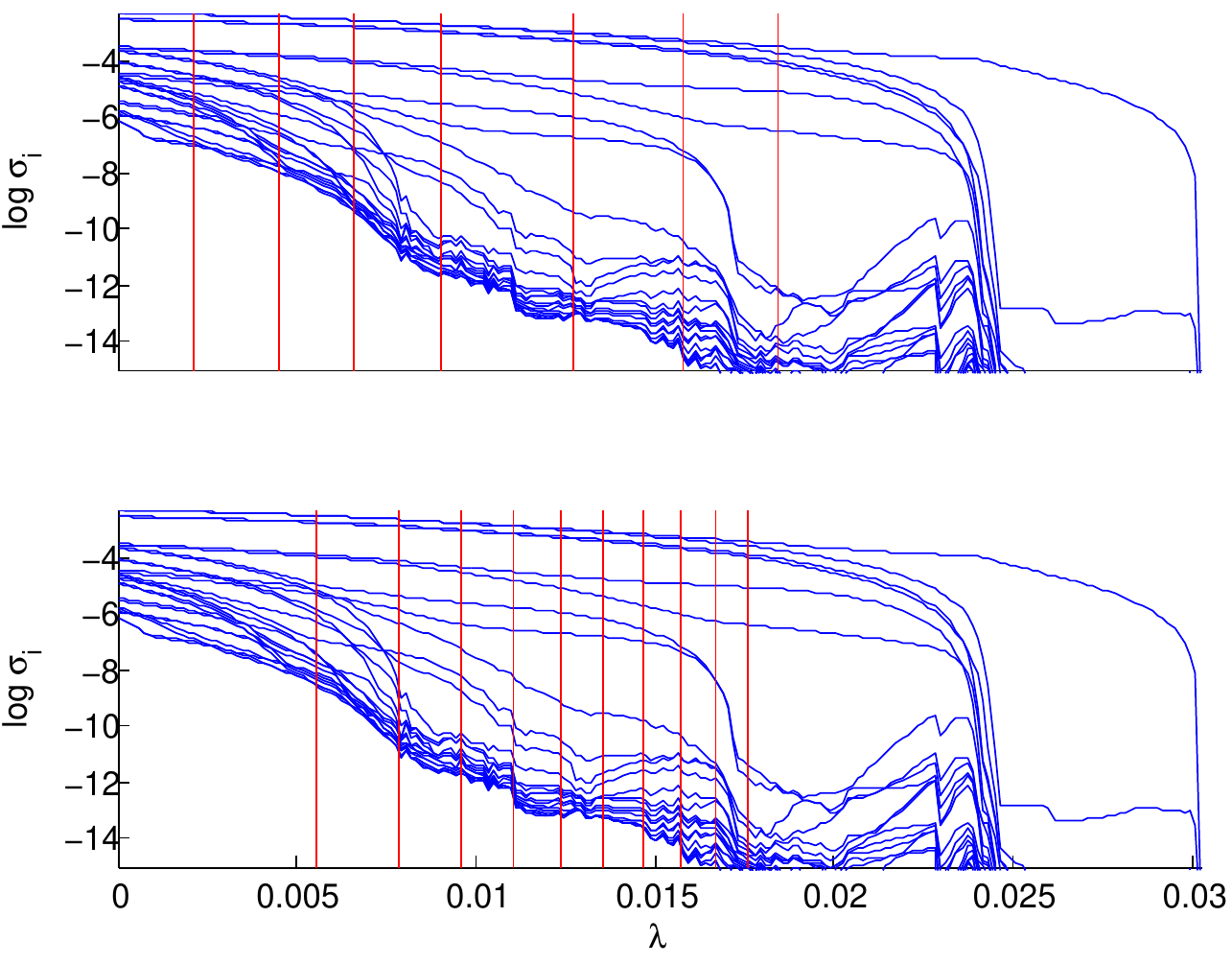}
\caption{Plot of significant singular values $\sigma_i,i=1,\ldots,26,$ for $\tt{build.mat}$. Vertical lines indicate grid points. Upper: Algortihm 1 ($\varepsilon/J^\text{max} = 0.2$, where $J^\text{max}=\norm{\Hc(g_o)}_*$). Lower: Algorithm 2 ($\varepsilon=n\norm{g_o}_2^2/M$, where $M=40$).}
\end{figure}
\begin{figure}
\centering
\includegraphics[scale=0.65]{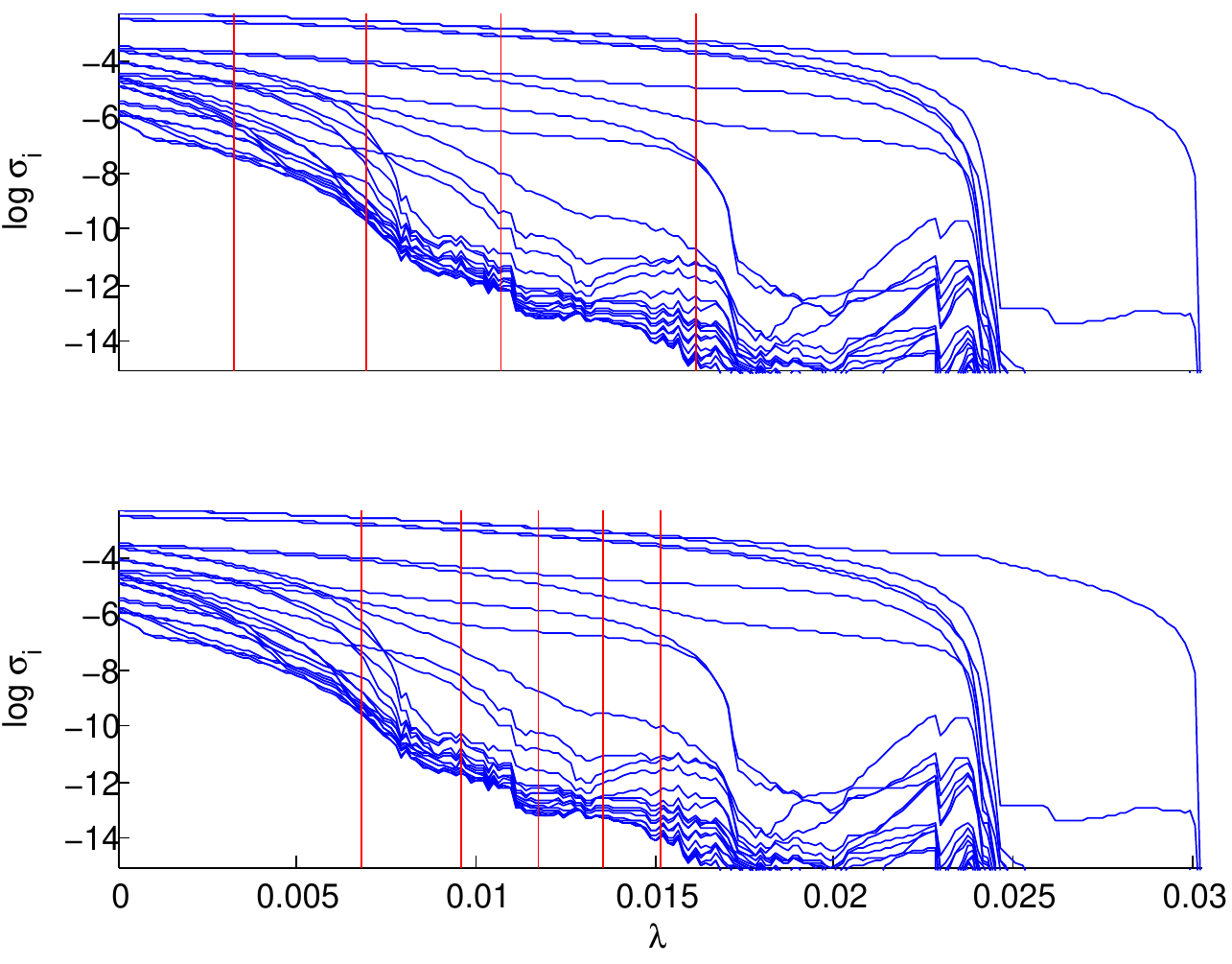}
\caption{Plot of significant singular values $\sigma_i,i=1,\ldots,26,$ for $\tt{build.mat}$. Vertical lines indicate grid points. Upper: Algortihm 1 ($\varepsilon/J^\text{max} = 0.3$, where $J^\text{max}=\norm{\Hc(g_o)}_*$). Lower: Algorithm 2 ($\varepsilon=n\norm{g_o}_2^2/M$, where $M=30$).}
\end{figure}
\begin{figure}
\centering
\includegraphics[scale=0.65]{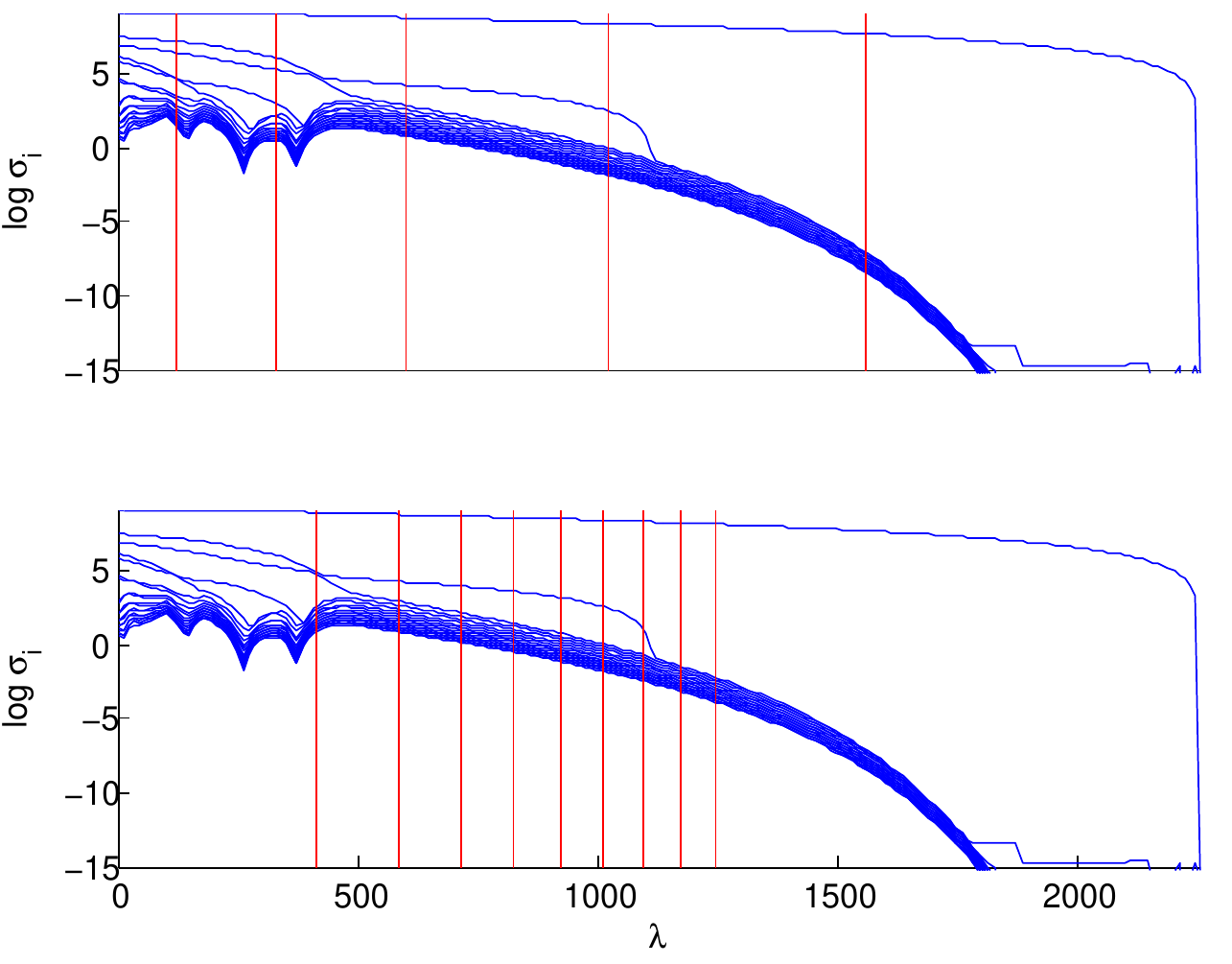}
\caption{Plot of significant singular values $\sigma_i,i=1,\ldots,20,$ for $\tt{eady.mat}$. Vertical lines indicate grid points. Upper: Algortihm 1 ($\varepsilon/J^\text{max} = 0.2$, where $J^\text{max}=\norm{\Hc(g_o)}_*$). Lower: Algorithm 2 ($\varepsilon=n\norm{g_o}_2^2/M$, where $M=40$).}
\end{figure}
\begin{figure}
\centering
\includegraphics[scale=0.65]{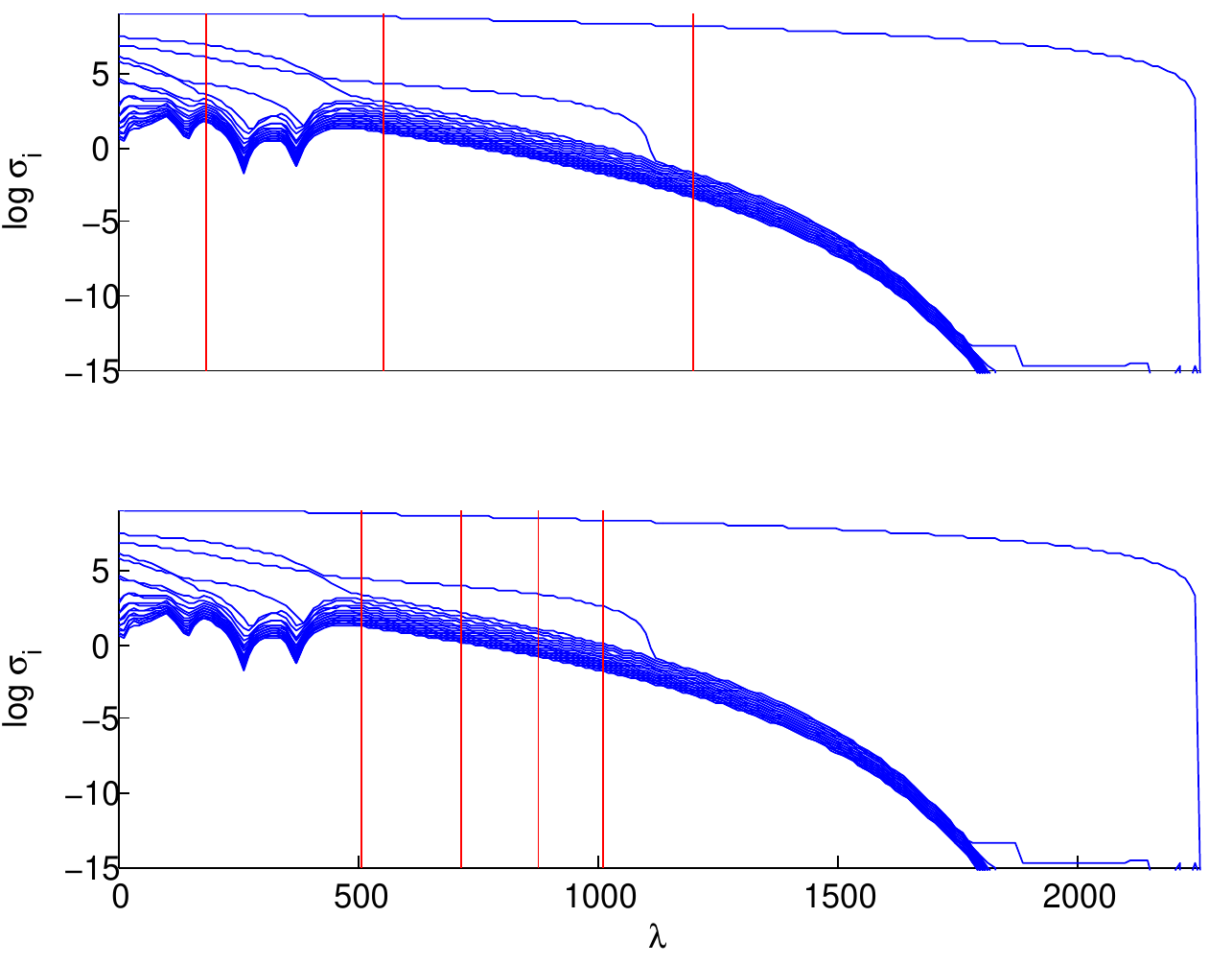}
\caption{Plot of significant singular values $\sigma_i,i=1,\ldots,20,$ for $\tt{eady.mat}$. Vertical lines indicate grid points. Upper: Algortihm 1 ($\varepsilon/J^\text{max} = 0.3$, where $J^\text{max}=\norm{\Hc(g_o)}_*$). Lower: Algorithm 2 ($\varepsilon=n\norm{g_o}_2^2/M$, where $M=30$).}
\end{figure}
\begin{figure}
\centering
\includegraphics[scale=0.65]{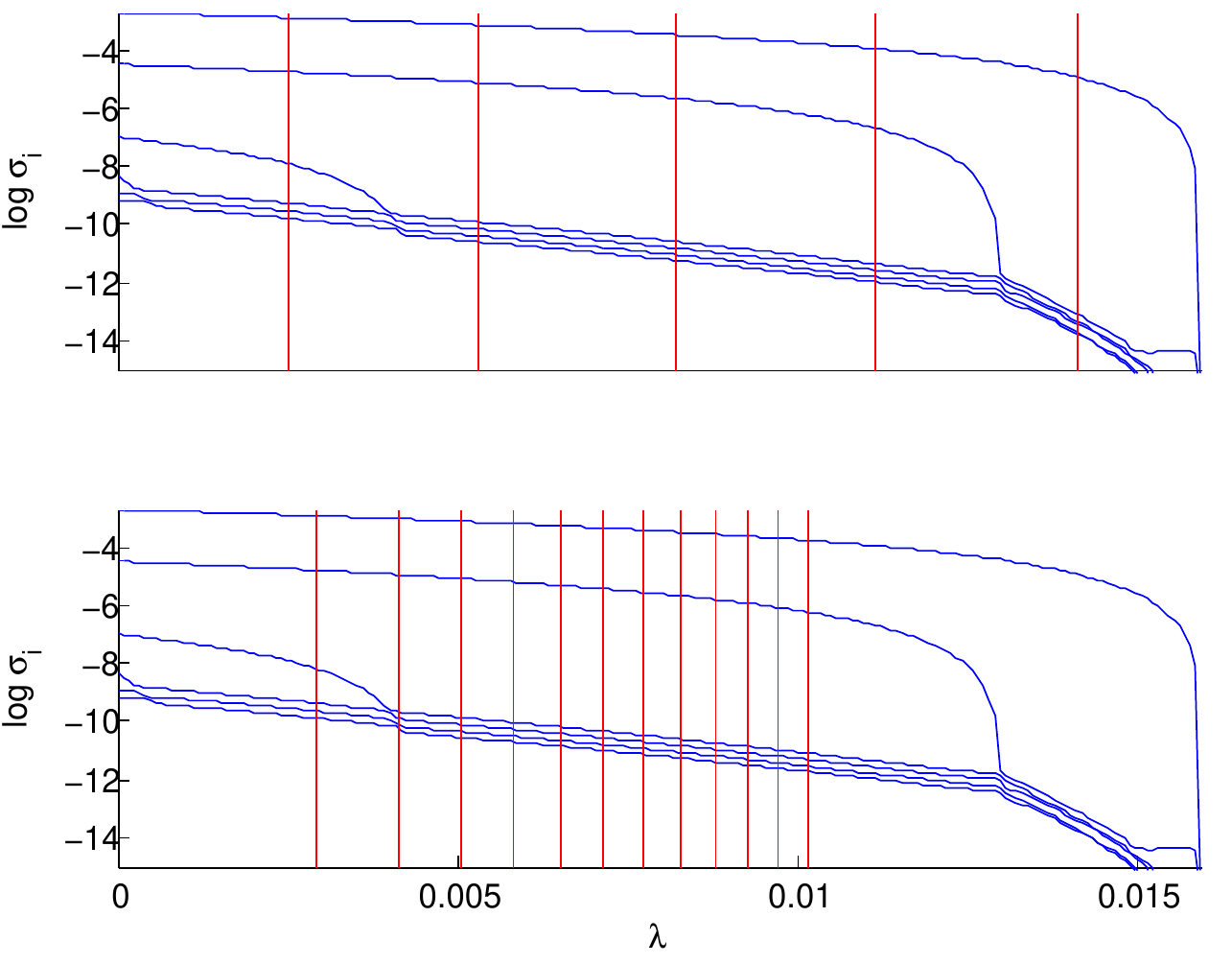}
\caption{Plot of significant singular values $\sigma_i,i=1,\ldots,6,$ for $\tt{heat-cont.mat}$. Vertical lines indicate grid points. Upper: Algortihm 1 ($\varepsilon/J^\text{max} = 0.2$, where $J^\text{max}=\norm{\Hc(g_o)}_*$). Lower: Algorithm 2 ($\varepsilon=n\norm{g_o}_2^2/M$, where $M=40$).}
\end{figure}
\begin{figure}
\centering
\includegraphics[scale=0.65]{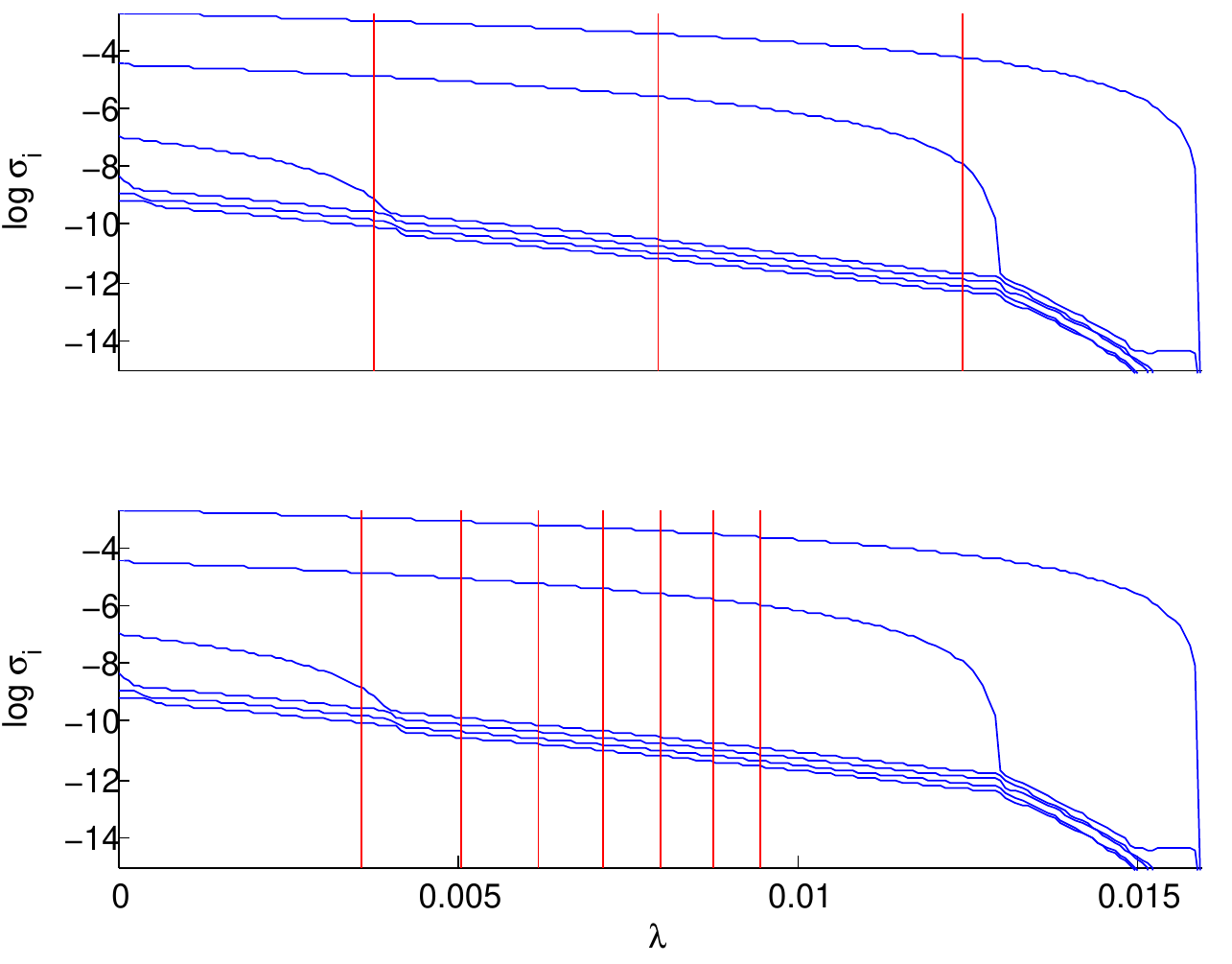}
\caption{Plot of significant singular values $\sigma_i,i=1,\ldots,6,$ for $\tt{heat-cont.mat}$. Vertical lines indicate grid points. Upper: Algortihm 1 ($\varepsilon/J^\text{max} = 0.3$, where $J^\text{max}=\norm{\Hc(g_o)}_*$). Lower: Algorithm 2 ($\varepsilon=n\norm{g_o}_2^2/M$, where $M=30$).}
\end{figure}
\begin{figure}
\centering
\includegraphics[scale=0.65]{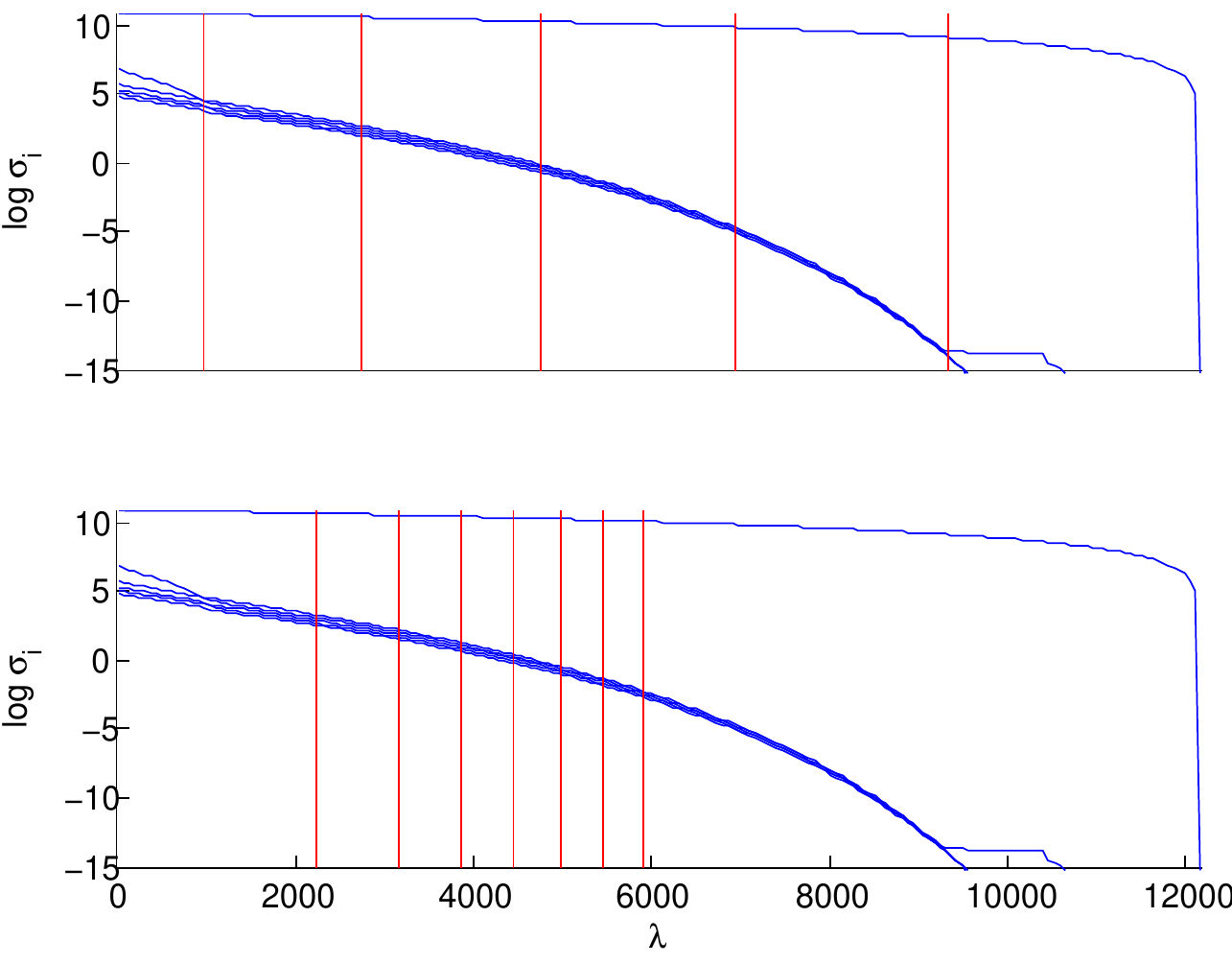}
\caption{Plot of significant singular values $\sigma_i,i=1,\ldots,6,$ for $\tt{pde.mat}$. Vertical lines indicate grid points. Upper: Algortihm 1 ($\varepsilon/J^\text{max} = 0.2$, where $J^\text{max}=\norm{\Hc(g_o)}_*$). Lower: Algorithm 2 ($\varepsilon=n\norm{g_o}_2^2/M$, where $M=40$).}
\end{figure}
\begin{figure}
\centering
\includegraphics[scale=0.65]{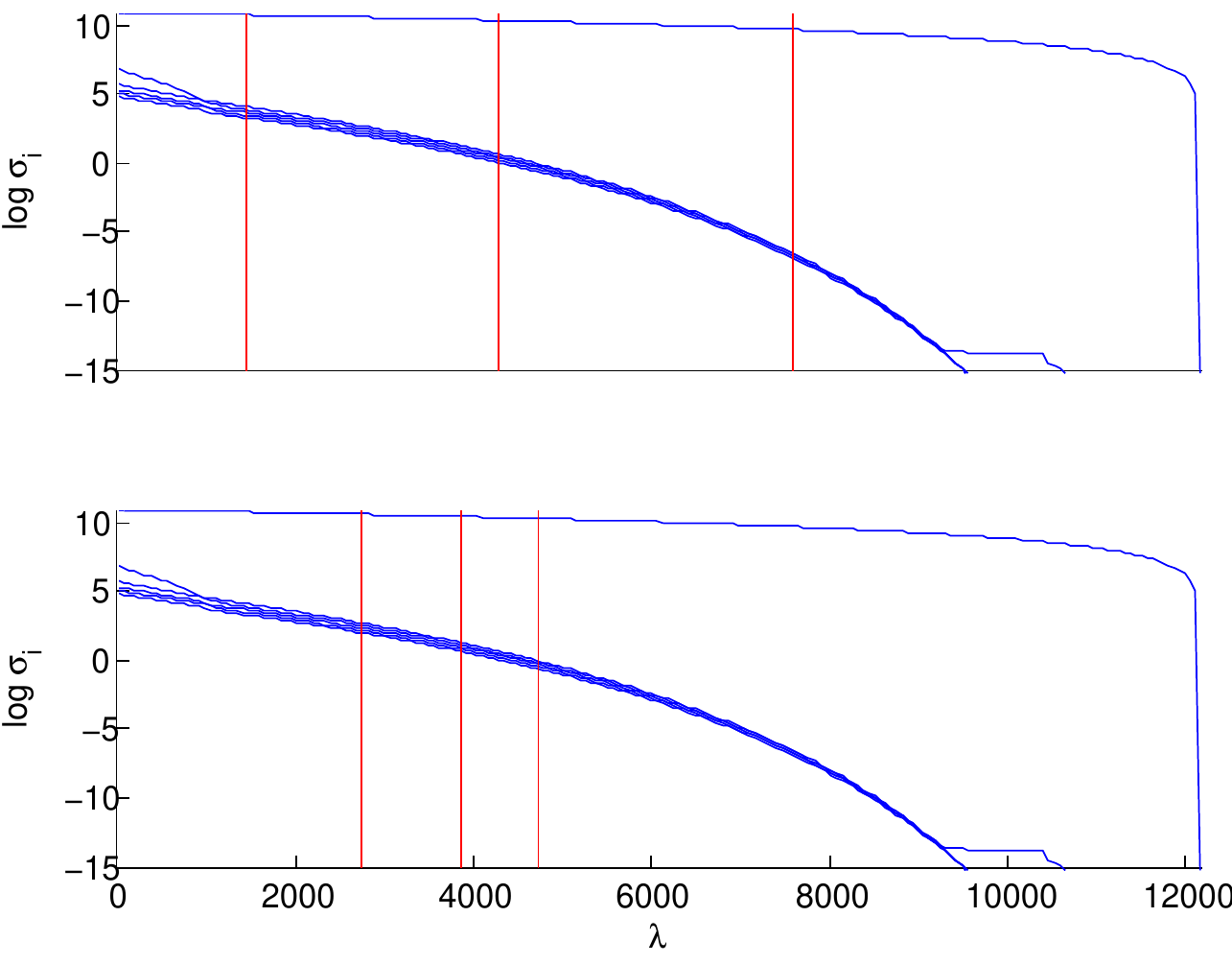}
\caption{Plot of significant singular values $\sigma_i,i=1,\ldots,6,$ for $\tt{pde.mat}$. Vertical lines indicate grid points. Upper: Algortihm 1 ($\varepsilon/J^\text{max} = 0.3$, where $J^\text{max}=\norm{\Hc(g_o)}_*$). Lower: Algorithm 2 ($\varepsilon=n\norm{g_o}_2^2/M$, where $M=30$).}
\end{figure}

\section{Results}

Algorithm 1 and 2 are implemented on single-input-single-output model order reduction benchmarks\footnote{The model reduction benchmarks are avaliable at slicot.org/20-site/126-benchmark-examples-for-model-reduction}. The continuous-time impulse response sequences are discretized using the Matlab command $\tt{c2d}$ with sampling times listed in Table 1. We here set $W=0$, but design a Frank-Wolfe algorithm for optimizing (\ref{dg}) over $W$ in Section \ref{implem}. The tolerances $\varepsilon$ are chosen as a fraction the maximum possible cost of (\ref{probHank}) in Algorithm 1, and according to $\varepsilon=n\norm{g_o}_2^2/\Mtwo$ for Algorithm 2.

For large scale problems (\ref{problem1}) we suggest an Alternating Direction Method of Multipliers (ADMM), c.f. \cite{Boyd:2010}, \cite{Yang:2012}. Our method is similar to \cite{Liu:2013} and provided in Section \ref{implem}.

The results are summarized in Table 1. We observe that, for Algorithm 1, the bound $\Monegen$ is very loose. We also see the smallest possible $\varepsilon$ such that $d_{\ls}(\ls,W=0)<\varepsilon$ for any $\ls\in(\lambda_\text{min},\lambda_\text{max})$. For the system $\tt{heat-cont.mat}$ this extreme value is high, but we observe that for most part of the regularization path $d_{\ls}(\ls,W=0)$ is very low; it only increases very close to $\lambda_\text{max}$. For smaller values of $\varepsilon$ in Algorithm 1, we may optimize over $W$. Then, it is possible to use arbitrarily small $\varepsilon$, since $d_{\ls}(\ls,W^\perp)=0$ (see Remark \ref{rem}).

In Figure 2 we illustrate the ideas in Algorithm 1 and 2.

In Figures 3-12 we visualize the grid points for Algorithm 1 and 2, respectively, when applied to the model the different benchmark models. For Algorithm 1, we generally see that the grid points are slightly more dense in for smaller values of $\lambda$. For Algorithm 2, the grid points are also more dense in the first part of the regularization path; it stops gridding when the first argument in (\ref{sg}) becomes active. Thus, the algorithms are suitable in cases where the singular value drops happen for low values of $\lambda$, which is a general observation we have made when studying these benchmark examples.

\section{Conclusion}

We have proposed a method to approximate the regularization path of a quadratically constrained nuclear norm minimization problem, with guarantees on the singular values of the matrix argument. The algorithms solve the problem for a finite, explicitly upper-bounded, number of values of the regularization parameter. We have also provided details regarding efficient implementation of the algorithms.

The results show that the algorithms generate grid points that are suitable for tracking changes in the singular values.

\section{Appendix}
\begin{proof}[Proof of Lemma \ref{lem:perp}]
The existence of vectors proportional to $x_o-\xs$ can be proved using the optimality conditions of (\ref{problem1}), which imply that the minimizer of (\ref{problem1}), $\x$, minimizes the Lagrangian $L(g,z)$ for some Lagrange multiplier $z\geq 0$ \cite[pp.~217]{Luenberger-69}, i.e.,
$$
\begin{aligned}
0 &\in \partial_x \left. L(x,z) \right|_{x=\xs} \\
&= \partial_x \left.\norm{\Ac (x)}_*\right|_{x=\xs} + \partial_x z \left.\norm{x-x_o}_2\right|_{x=\xs}.
\end{aligned}
$$
The second term is computed explicitly as
$$
\left.\partial_x z \norm{x-x_o}_2\right|_{x=\xs} = \frac{z}{\norm{\xs-x_o}_2}\left(\xs-x_o\right),
$$
so that
$$
\frac{z}{\norm{\xs-x_o}_2}\left(x_o-\xs\right) \in \left.\partial_x \norm{\Ac (x)}_*\right|_{x=\xs},
$$
showing that there is a subgradient proportional to $x_o-\xs$.
\end{proof}

\begin{proof}[Proof of Theorem \ref{Malg1}]
Consider the choice of $\hsW = \hperp$ in (\ref{hperp}). Then, for $\lambda>\ls$,
$$
d_{\ls}\left(\lambda,W^\text{opt}\right) \leq d_{\ls}\left(\lambda,W^\perp\right) = \norm{\hperp}_2\left(\lambda-\ls\right),
$$
in which we will now bound $\norm{\hperp}_2$. To do this, we can bound
\begin{equation} \label{proofineq}
\begin{aligned}
&\norm{\hsW}_2 = \left(\sum\limits_{k=1}^n \left( \text{Tr } (UV^T+W)^TH_k\right)^2\right)^\frac{1}{2} = \\
&\left(\sum\limits_{k=1}^n \left( \text{Tr } VU^TH_k + \text{Tr } W^TH_k \right)^2\right)^\frac{1}{2} \leq \left(\sum\limits_{k=1}^n \left( 2\norm{H_k}_*\right)^2\right)^\frac{1}{2} \\
&= 2 \left(\sum\limits_{k=1}^p \norm{I_k}_*^2 + \sum\limits_{k=p+1}^n \norm{I_{n-k+1}}_*^2\right)^\frac{1}{2} = 2c_n,
\end{aligned}
\end{equation}
where we have used in the first inequality twice the characterization of the nuclear norm as
$$
\norm{X}_* = \text{sup} \left\lbrace\text{tr} (Y^TX) : \norm{Y}\leq 1\right\rbrace
$$
with $Y=UV^T$ such that $\norm{UV^T}=1$, and $Y=W$ such that $\norm{W}\leq 1$, and in the second last equality the unitary invariance of the nuclear norm.

Using the bound on $\norm{\hsW}_2$ in $\norm{\hperp}_2\left(\lambda-\ls\right)$ the sub-intervals of the algorithm are of length at most
$$
\ls_{i+1}-\ls_i = \frac{\varepsilon}{2c_n}.
$$
Hence, we need at least
$$
\left\lfloor \frac{2c_n(\lambda_\text{max}-\lambda_\text{min})}{\varepsilon} \right\rfloor
$$
evaluations of (\ref{problem1}). Since $(\lambda_\text{min},\lambda_\text{max})=(0,\norm{g_o}_2)$ according to (\ref{lambda}), we obtain (\ref{mmax1gen}). If we set $W=0$ in (\ref{proofineq}) we obtain (\ref{mmax1}).
\end{proof}

\begin{proof}[Proof of Theorem \ref{Malg2}]
If the first argument in (\ref{sg}) is ignored Algorithm 2 will use a greater or equal number of grid points. In this case, Algorithm 2 evaluates (\ref{problem1}) for $\lambda=\ls_i, i=1,\ldots,\Mtwo$, obtained from $s_{\ls_i}(\ls_{i+1})=\varepsilon$. As in (\ref{sg}), $\sg = n\left(\lambda^2-(\ls)^2\right)$, which gives the relation
$$
\ls_{i+1} = \left(\frac{1}{n} \varepsilon + (\ls_{i})^2\right)^{1/2}.
$$
By induction in $i$, with $\ls_0=0$, this becomes
$$
\ls_{i+1} = \left(\frac{i}{n} \varepsilon \right)^{1/2}.
$$
Given that $\lambda_\text{max} = \norm{g_o}_2$, as in (\ref{lambda}), the largest integer $\Mtwo$ such that $\left(\frac{\Mtwo}{n} \varepsilon \right)^{1/2}\leq \norm{g_o}_2$ obeys $\Mtwo = \left\lfloor \frac{n\norm{g_o}_2^2}{\varepsilon} \right\rfloor$. Since this is a worst-case scenario we obtain the upper bound in (\ref{mmax2}).
\end{proof}

\bibliographystyle{IEEEbib}
\bibliography{niclas_SPW}

\end{document}